\documentclass[preprint,12pt]{elsarticle}



\usepackage{regexpatch}
\makeatletter
\regexpatchcmd{\ps@pprintTitle}
  {\cE\}\ \cE\}}{\cE\}\cE\}} 
  {}{\typeout{Failed to patch \ps@pprintTitle}}
\makeatother

\usepackage{graphicx}
\usepackage{tikz}
\usepackage{amssymb}

\usepackage{lineno}
\usepackage[bookmarks,bookmarksnumbered]{hyperref}
\hypersetup{colorlinks = true,linkcolor = blue,anchorcolor =red,citecolor = blue,filecolor = red,urlcolor = red}
\hypersetup{pdfauthor={Name}}


\usepackage{xcolor}
\usepackage{graphicx}
\usepackage[version=4]{mhchem}
\usepackage{amssymb}
\usepackage{amsmath}
\usepackage{amsfonts}
\def\eps{\ensuremath\varepsilon}
\def\cE{\mathbb{E}}
\usepackage{lineno}
\usepackage[margin=1.0in]{geometry}
\usepackage{natbib}
\usepackage{enumitem}

\usepackage{nicefrac}
\usepackage{appendix}

\usepackage{subfigure}

\usepackage{amsthm}
\theoremstyle{definition} 
\newtheorem{theorem}{Theorem}
\newtheorem{proposition}{Proposition}

\newtheorem{definition}{Definition}

\journal{arXiv}

\begin{document}

\begin{frontmatter}

\title{Accuracy versus Predominance: Reassessing the validity of the quasi-steady-state approximation\footnote{In 
memory of Claudio Mendoza (1951---2024), a scientist, an intellectual, mentor and friend, whose insights and 
collaboration enriched our understanding of enzyme kinetics and the polymerase chain reaction.}}



\author[label1]{Kashvi Srivastava}
\author[label2]{Justin Eilertsen}
\author[label1]{Victoria Booth}
\author[label3]{Santiago Schnell}
\ead{santiago.schnell@nd.edu}

\address[label1]{Department of Mathematics, University of Michigan, Ann Arbor, MI 48109, USA}
\address[label2]{Mathematical Reviews, American Mathematical Society, 
416 $4th$ Street, Ann Arbor, MI 48103, USA}
\address[label3]{Department of Biological Sciences and Department of Applied \& 
Computational Mathematics \& Statistics, University of Notre Dame, Notre Dame, IN 46556, USA}

\begin{abstract}
The application of the standard quasi-steady-state approximation to the Michaelis–Menten
reaction mechanism is a textbook example of biochemical model reduction, derived using
singular perturbation theory. However, determining the specific biochemical conditions that
dictate the validity of the standard quasi-steady-state approximation remains a challenging
endeavor. Emerging research suggests that the accuracy of the standard quasi-steady-state
approximation improves as the ratio of the initial enzyme concentration, $e_0$, to the Michaelis
constant, $K_M$ , decreases. In this work, we examine this ratio and its implications for the
accuracy and validity of the standard quasi-steady-state approximation as 
compared to other quasi-steady-state reductions in its proximity. Using standard
tools from the analysis of ordinary differential equations, we show that while $e_0/K_M$ provides
an indication of the standard quasi-steady-state approximation’s asymptotic accuracy, the
standard quasi-steady-state approximation’s predominance relies on a small ratio of $e_0$ to
the Van Slyke-Cullen constant, $K$. Here, we define the predominance of a 
quasi-steady-state reduction when it offers the highest approximation accuracy among other 
well-known reductions with overlapping validity conditions. We conclude that the magnitude 
of $e_0/K$ offers the most accurate measure of the validity of the standard quasi-steady-state 
approximation.
\end{abstract}

\begin{keyword}
Singular perturbation, quasi-steady-state approximation, Michaelis--Menten reaction mechanism, 
timescale separation
\end{keyword}

\end{frontmatter}

\section{Introduction} \label{S:intro}
The Michaelis--Menten (MM) reaction mechanism, a cornerstone of enzyme kinetics, describes the 
irreversible catalysis of a substrate, $S$, into product, $P$, by an enzyme, $E$:
\begin{align}\label{MM}
\ce{S + E <=>[$k_1$][$k_{-1}$] C ->[$k_2$] E + P}
\end{align}
where $C$ represents an enzyme-substrate intermediate complex \cite{michaelis1913kinetik}. The 
simplest form of the mass action equations modeling the dynamics of the substrate, complex, enzyme, 
and product concentrations ($s$, $c$, $e$, and $p$, respectively) is a two-dimensional system of 
ordinary differential equations (ODEs):
\begin{subequations}\label{mmMA}
\begin{align}
\dot{s} &= -k_1(e_0-c)s+k_{-1}c,\label{mms}\\
\dot{c} &=  \;\;\; k_1(e_0-c)s-(k_{-1}+k_2)c,\label{mmc} 
\end{align}
\end{subequations}
with $\dot{e}=-\dot{c}$, $\dot{p}=k_2c$, where ``$\dot{\phantom{x}}$" denotes differentiation with 
respect to time, $t$. The constant, $e_0$, is the total enzyme concentration, a conserved quantity. 

Closed-form solutions to \eqref{mmMA} are unattainable due to quadratic nonlinearities. Consequently, 
reduced equations approximating the long-time dynamics of \eqref{mmMA} are often employed to both 
elucidate the kinetics of \eqref{MM} and serve as a forward model for parameter estimation from in 
vitro progress curve experiments via nonlinear regression \cite{schnell2003century, Stroberg2016, choi2017beyond}. 
The standard quasi-steady-state approximation (sQSSA) to the mass action equations associated with 
the MM reaction mechanism is a well-known and extensively studied reduced model in 
biochemical kinetics \cite{schnell2003century,schnell2014validity}. Simply put, the sQSSA is 
a reduced system that approximates the depletion of substrate:
\begin{equation}\label{sQSSA}
\dot{s} = -\cfrac{k_2e_0s}{K_M+s}, \quad s(0)=s_0,
\end{equation}
where $K_M = (k_{-1}+k_2)/k_1$ is the Michaelis constant and $k_2\,e_0$ is the limiting rate. The
Michaelis constant is composite of the equilibrium constant $K_S=k_{-1}/k_1$ and the Van Slyke-Cullen 
constant $K=k_2/k_1$, and can be expressed as $K_M = K_S + K$. We assume initial conditions for 
\eqref{mmMA} lie on the s-axis, i.e., $(s,c)(0)=(s_0,0)$. Under this assumption, we can equip 
\eqref{sQSSA} with the initial condition $s(0)=s_0$. \footnote{Loosely speaking, 
solutions to the mass action system converge to solutions of the reduced system as 
$\varepsilon \to 0$ as long as initial conditions are sufficiently close to the critical manifold. 
Please visit~\ref{sec:appendix_initialdata} for details on the initial conditions of quasi-steady-state 
reductions.}

Geometric singular perturbation theory provides mathematical justification for \eqref{sQSSA}: with 
$(s,c)(0)=(s_0,0)$, the solution to the first component of the mass action 
system~\eqref{mms} converges to the approximation~\eqref{sQSSA} 
in the limit as $e_0 \to 0$, provided all other parameters, $k_{-1}, k_2$ and $k_1$, are bounded above 
and below by positive constants \cite[see, e.g.][]{Goeke2015,Eilertsen:2024:Unreasonable}. For further 
information on geometric singular perturbation theory,
see \citet{kaper2002asymptotic, hek2010geometric, jones1995geometric} and \citet{Wechselberger2020}. 

While the mathematical rationale behind \eqref{sQSSA} is well-established, applying \eqref{sQSSA} 
presents a challenge: determining how small $e_0$ must be to ensure \eqref{sQSSA} is sufficiently 
accurate. This assessment is complicated by the relative nature of ``small.'' $e_0$ is only small 
compared to another quantity, but what is that quantity?  Emerging research 
 \cite[see, e.g.][]{eilertsen2023natural,eilertsen2024rigorous,Eilertsen:2024:Unreasonable} 
suggests that the sQSSA is accurate whenever $e_0 \ll K_M$, or $\eps_{RS} = e_0/K_M \ll 1$, which we 
refer to as the Reich-Sel'kov condition \cite{reich1974mathematical}. Thus, the ``other'' quantity 
is the Michaelis constant, $K_M$, and the sQSSA is considered \textit{valid} -- in the sense of 
accuracy -- when the Reich-Sel'kov condition holds.

Beyond accuracy, it is crucial to determine conditions that ensure the \textit{predominance} of the 
sQSSA~\eqref{sQSSA}. When the reaction unfolds away from the singular limit (i.e., with $0<e_0$), 
multiple reduced models may be accurate.  However, it is not obvious that the Reich-Sel'kov condition 
guarantees that the sQSSA is the predominant reduction among the nearby Fenichel 
reductions. Other reduced models may be more 
accurate\footnote{In this context, \textit{accuracy} refers to the ability of the reduction to 
approximate the flow on the slow manifold. We are choosing to ignore errors that arise in the 
approach to the slow manifold.} 
than the sQSSA even when $e_0\ll K_M$. Therefore, identifying the most accurate 
approximation among the available reduced models with overlapping validity conditions 
is essential, and this most accurate model should be 
labeled as the predominant reduction (see, Section~\ref{SEC2} for a better 
understanding of the class of reduced models considered).

This paper aims to deepen our understanding of the Reich-Sel'kov condition and distinguish between 
accuracy and predominance concerning the sQSSA~\eqref{sQSSA}.  To our knowledge, this distinction has 
not been previously addressed in the literature.  Using rigorous phase-plane analysis and methods from 
ODE analysis, we demonstrate that while the Reich-Sel'kov condition provides a general indication of 
the sQSSA's accuracy, it does not ensure its predominance, which requires a more restrictive criterion.  
We derive this qualifier and challenge the notion that the validity of the sQSSA solely pertains to its 
accuracy.  We argue that validity should encompass both the accuracy and predominance of the sQSSA.  
In essence, the qualifier establishing the sQSSA's validity is more restrictive than any previously 
reported condition.

\section{Fenichel Theory, TFPV Theory, and Literature Review}\label{SEC2}
This section reviews Fenichel theory and Tikhonov-Fenichel Parameter Value (TFPV) theory, placing the 
problem within a precise mathematical context. We also examine recent literature on the validity of 
the sQSSA~\eqref{sQSSA}.

\subsection{A brief introduction of Fenichel theory}
Geometric singular perturbation theory, also known as Fenichel theory \cite{Fenichel1971, Fenichel1979}, 
provides the rigorous foundation for deriving the sQSSA. Consider a perturbed dynamical system:
\begin{equation}\label{singular}
\dot{x} = f(x,\varepsilon),\quad f:\mathbb{R}^n\times \mathbb{R} \to \mathbb{R}^n
\end{equation}
where $\varepsilon \in \mathbb{R}$ is a small parameter close to zero.\footnote{We assume $f(x,\varepsilon)$
is sufficiently smooth in both arguments.} If the set of singularities:
\begin{equation}
S_0 := \{x\in \mathbb{R}^n: f(x,0)=0\}
\end{equation}
is a non-empty $k$-dimensional submanifold of $\mathbb{R}^n$ where $0< k < n$, then \eqref{singular} is 
\textit{singularly perturbed}~\cite{Wechselberger2020}. The rank of the Jacobian with respect to 
$x$, $D_1f(x,0)$, is constant and equal to $n-k$ along $S_0$.

\subsubsection{Persistence and Flow}
A key result from Fenichel theory states that if $S_0$ is compact and normally hyperbolic -- meaning the 
$n-k$ non-zero eigenvalues of $D_1f(x,0)$  are strictly bounded away from the imaginary axis along $S_0$ --
then $S_0$ persists. This means an invariant manifold, $S_0^{\varepsilon}$, exists for 
$\varepsilon \in [0,\varepsilon_0)$. Expressing $S_0^{\varepsilon}$ as a graph over the 
first $k$ coordinates of $x$
\begin{equation}
S_0^{\varepsilon}:=\{(z,y,\varepsilon)\in \mathbb{R}^k \times \mathbb{R}^{n-k}\times \mathbb{R}: y = h(z,\varepsilon)\}, 
\end{equation}
where $x:=(z,y)$ and $z\in\mathbb{R}^k,\, y\in \mathbb{R}^{n-k}$, the flow on $S_0^{\varepsilon}$ is satisfies:
\begin{equation}\label{slowflow}
\dot{z}_i = f_i(z,h(z,\varepsilon),\varepsilon),\quad 1\leq i \leq k
\end{equation}
where the $f_i$'s are the component functions of $f$. Equation~\eqref{slowflow} describes the evolution 
of $k$ independent variables. The evolution of the remaining $n-k$ variables depends 
on the evolution of $z$:
\begin{equation}
\dot{y} = D_1h(z,\varepsilon)\dot{z},\quad D_1h(z,\varepsilon) \in \mathbb{R}^{(n-k)\times k}.
\end{equation}

\subsubsection{Critical Manifolds and Stability}
In singular perturbation theory, the submanifold $S_0$ of stationary points is the \textit{critical 
manifold}. The persistence of a normally hyperbolic $S_0$ asserts the existence of a normally hyperbolic 
invariant manifold, $S_0^{\varepsilon}$, and also the persistence of its stable and unstable manifolds,
$W^s(S_0)$ and $W^u(S_0)$. Thus, $S_0^{\varepsilon}$ inherits the stability properties as 
$S_0$.\footnote{Depending on $S_0$, $S_0^{\varepsilon}$ will either be attractive 
$W^u(S_0)=\emptyset \implies W^u(S_0^\varepsilon)=\emptyset$, repulsive 
$W^s(S_0)=\emptyset \implies W^s(S_0^\varepsilon)=\emptyset$, 
or will be of saddle type and come equipped with both stable and unstable manifolds} If $S_0$ is normally 
hyperbolic and attracting, \eqref{slowflow} describes the long-time evolution of \eqref{singular} because
trajectories near $S_0^{\varepsilon}$ contract towards it exponentially.

\subsubsection{Approximating the Slow Flow}
Fenichel's initial results require explicit knowledge of $S_0^{\varepsilon}$, which is often unavailable 
in practice. To address this, \citet{Fenichel1979} proposed a method to approximate the flow on 
$S_0^{\varepsilon}$ without explicitly knowing it. Expanding $f(x,\varepsilon)$ in a Taylor series near 
$\varepsilon =0$ gives:
\begin{equation}\label{expand}
\dot{x} = f(x,0) + \cfrac{\partial f(x,\varepsilon)}{\partial \varepsilon}\Bigg|_{\varepsilon =0}\cdot \varepsilon + \mathcal{O}(\varepsilon^2) =: f_0(x) + \varepsilon f_1(x) + \mathcal{O}(\varepsilon^2).
\end{equation}
If $S_0$ is normally hyperbolic\footnote{Normal hyperbolicity of $S_0$ implies the algebraic and geometric
multiplicity of $D_1f(x,0)$'s trivial eigenvalue is equal to $k$.} and attracting, a splitting exists:
\begin{equation}\label{decomp}
T_x\mathbb{R}^n \cong \mathbb{R}^n = \ker D_1f(x,0) \oplus \;\text{image} \;D_1f(x,0)
\end{equation}
for all $x\in S_0$, where $T_xS_0 \cong \ker D_1f(x,0)$ for all $x\in S_0$. Projecting the right-hand side
of \eqref{expand} onto $T_xS_0$ approximates the dynamics on $S_0^{\varepsilon}$. This projection is
achieved using the operator $\pi^s:\mathbb{R}^n \to \ker D_1f(x,0)$; see~\citet{Goeke2014} 
and~\citet{Wechselberger2020} for details on the construction of $\pi^s$. The resulting $n$-dimensional dynamical 
system (with $k$ independent variables) is:
\begin{equation}\label{reduced}
x' =  \pi^s f_1(x)|_{x\in S_0}
\end{equation}
where ``$\phantom{x}'$" denotes differentiation with respect to the slow timescale, $\tau = \varepsilon t$. 
The dynamics on $S_0^{\varepsilon}$ determined by \eqref{slowflow} converges to \eqref{reduced} as $\varepsilon \to 0$. 
Equation~\eqref{reduced} is referred to as quasi-steady-state approximation (QSSA) in chemical kinetics.

\subsubsection{Convergence and Initial Conditions}
Solutions to \eqref{singular} with initial condition $x(0)=(z,y)(0)=(z_0,y_0)$ generally do not converge
to the solution of \eqref{reduced} with initial condition $x(0)$ unless $x(0)$ is 
sufficiently close to $S_0$. To ensure convergence, both \eqref{mmMA} and \eqref{reduced} generally 
require a modified initial ``$\tilde{x}_0$" (see, ~\ref{sec:appendix_initialdata} for details).
\subsection{From Theory to Practice: Tikhonov-Fenichel Parameter Values}
Traditional methods for putting mass action equations into perturbation form involve scaling and dimensional 
analysis, which can be unreliable. A more robust approach considers the mass action equations for a general 
reaction mechanism:
\begin{equation}
\dot{x} = f(x,p), \quad f:\mathbb{R}^n \times \mathbb{R}^m \to \mathbb{R}^n
\end{equation}
where $p$ represents an $m$-tuple of parameters. A point $p^*$ in parameter space is a Tikhonov-Fenichel 
Parameter Value (TFPV) \cite{Goeke2015, Goeke2017} if:
\begin{enumerate}
\item The set $S_0=\{x\in \mathbb{R}^n : f(x,p^*)=0\}$ is a $k$-dimensional submanifold of $\mathbb{R}^n$ with $0<k<n$, and
\item $S_0$ is normally hyperbolic.
\end{enumerate}
Expanding $f(x,p)$ near $p^*$ in a direction $\rho$, $p=p^* +\varepsilon \rho$, gives\footnote{The operator 
$D_2$ denotes differentiation with respect to the second argument, $p$.}
\begin{equation}
\dot{x} = f(x,p^*) + \varepsilon D_2f(x,p^*)\rho,
\end{equation}
which matches the form \eqref{reduced}. This highlights the need to specify a path in parameter space for 
taking the limit as $\varepsilon \to 0$ and $p\to p^*$.

For the MM system~\eqref{mmMA}, the parameter space is a subset of $\mathbb{R}^4_{\geq 0}$ with 
$p=(e_0,k_1,k_{-1},k_2)^T$. Here, $p^*:=(0,k_1,k_{-1},k_2)^T$ is a TFPV if $k_1$, $k_{-1}$, $k_2$ are bounded 
by positive constants. When $p=p^*$, the critical manifold, $S_0$, is the $s$-axis, which is normally 
hyperbolic and attractive: 
\begin{equation}
    S_0 := \{(s,c)\in \mathbb{R}^2_{\geq 0}: c=0\}.
\end{equation}

To put \eqref{mmMA} into perturbation form, we define a curve $\ell(\cdot)$ that passes through $p^*$
\begin{equation}
\ell(\varepsilon) = \begin{pmatrix}0\\k_1\\k_{-1}\\k_2\end{pmatrix} +  \varepsilon \begin{pmatrix}\widehat{e}_0\\0\\0\\0\end{pmatrix} = p^* + \varepsilon \rho
\end{equation}
where $e_0\mapsto \varepsilon \widehat{e}_0$. This effectively equates ``$\varepsilon$" with 
the magnitude of $e_0$.\footnote{Here, $\widehat{e_0}$ is of unit 
magnitude and essentially denotes the unit of $e_0$. We adhere to a similar hat notation for the rest 
of the work.} The perturbation form of \eqref{mmMA} for small $e_0$ is 
\begin{equation}\label{MMperturb}
\begin{pmatrix}\dot{s}\\\dot{c}\end{pmatrix} = \begin{pmatrix}(k_1s+k_{-1})c\\
-(k_1s+k_{-1}+k_2)c\end{pmatrix} + \varepsilon \widehat{e}_0k_1 s\begin{pmatrix}-1\\\;\;\;1\end{pmatrix}
\end{equation}
Projecting the right hand side of \eqref{MMperturb} onto $TS_0$ yields the sQSSA \eqref{sQSSA}:
\begin{equation}\label{sQSSA2}
\begin{pmatrix}\dot{s} \\ \dot{c}\end{pmatrix} = -\cfrac{\varepsilon \widehat{e}_0 k_2 s}{K_M+s}\begin{pmatrix}1\\0\end{pmatrix} \equiv
 -\cfrac{k_2e_0 s}{K_M+s}\begin{pmatrix}1\\0\end{pmatrix} .
\end{equation}

This derivation implies that the flow on $S_0^{\varepsilon}$ converges to \eqref{sQSSA2} as $e_0 \to 0$, 
with other parameters fixed and positive. Furthermore, the solution to the first component of \eqref{mmMA} also converges 
to the first component of \eqref{sQSSA2} since the initial condition $(s_0,0)\in S_0$.

Other TFPV values exist for \eqref{mmMA}. For instance, with $p^* = (e_0,k_1,k_{-1},0)^T$, the critical manifold becomes
\begin{equation}
    S_0 := \left\{(s,c)\in \mathbb{R}^2_{\geq 0}: c = \frac{e_0s}{K_S + s}\right\}
\end{equation} and the corresponding Fenichel 
reduction is the slow product QSS reduction:
\begin{equation}\label{slowp}
\begin{pmatrix}
\dot{s}\\\dot{c}
\end{pmatrix} = -\cfrac{k_2e_0 s(K_S+s)}{K_Se_0 + (K_S+s)^2}\begin{pmatrix} 1\\ \cfrac{K_Se_0 - K(K_S+s)}{(K_S+s)^2}\end{pmatrix}.
\end{equation}
This reduction differs from \eqref{sQSSA2} and will be analyzed in Section~\ref{sec:trapping}.

\subsection{The Open Problem: Identifying the Predominant Reduction}
Just as important as what Fenichel and TFPV theory tell us is what they do \textit{not}. While the theory 
indicates that \eqref{sQSSA2} approximates \eqref{mmMA} as $e_0\to 0$, it does not specify how small $e_0$ 
must be for \eqref{sQSSA2} to be valid in practice, where the singular limit is never truly reached. In other 
words, how small is small enough?

The size of $e_0$ must be considered relative to other parameters. Fenichel theory requires eigenvalue 
disparity near the critical manifold. For small $e_0$, the ratio of slow and fast eigenvalues near 
$(s,c)=(0,0)$ is: 
\begin{equation}\label{eig}
\delta :=\cfrac{e_0}{K_M}\cdot\cfrac{k_2}{k_{-1}+k_2} + \mathcal{O}(e_0^2).
\end{equation}
While $\delta \ll 1$ is necessary, it is not sufficient for the accuracy of \eqref{sQSSA2} \cite{eilertsen2023natural}. 
For instance, consider the case where $e_0=K_M$ but $k_2\ll k_{-1}$. This scenario renders small $\delta$, but the 
tangent vector to the graph of the $c$-nullcline at the origin will not align with the slow eigenvector, which is 
crucial for long-time accuracy. This misalignment can lead to significant errors in the sQSSA approximation, even 
when $\delta$ appears to indicate otherwise. 

The more restrictive condition $e_0\ll K_M$, introduced in \cite{reich1974mathematical}, has been rigorously 
proven sufficient to guarantee the accuracy of the 
sQSSA \cite{EILERTSEN2020, eilertsen2024rigorous, KangWon2017, Mastny2007, eilertsen2022stochastic}. 
However, this condition does not fully address the issue of predominance. For instance, if $k_{-1}$ is large 
and $e_0$ is reduced such that $k_1e_0 \ll k_{-1}$ (implying $e_0\ll K_M$), convergence to \eqref{sQSSA2}
is not implied. In fact, as $k_{-1} \to \infty$, the resulting singular perturbation problem yields a trivial 
QSSA: $\dot{s}=\dot{c}=0$ (see \ref{sec:largekminus1} for the formal calculation).

\begin{figure}
    \centering
     \begin{subfigure}
         \centering
         \includegraphics[scale = 0.38]{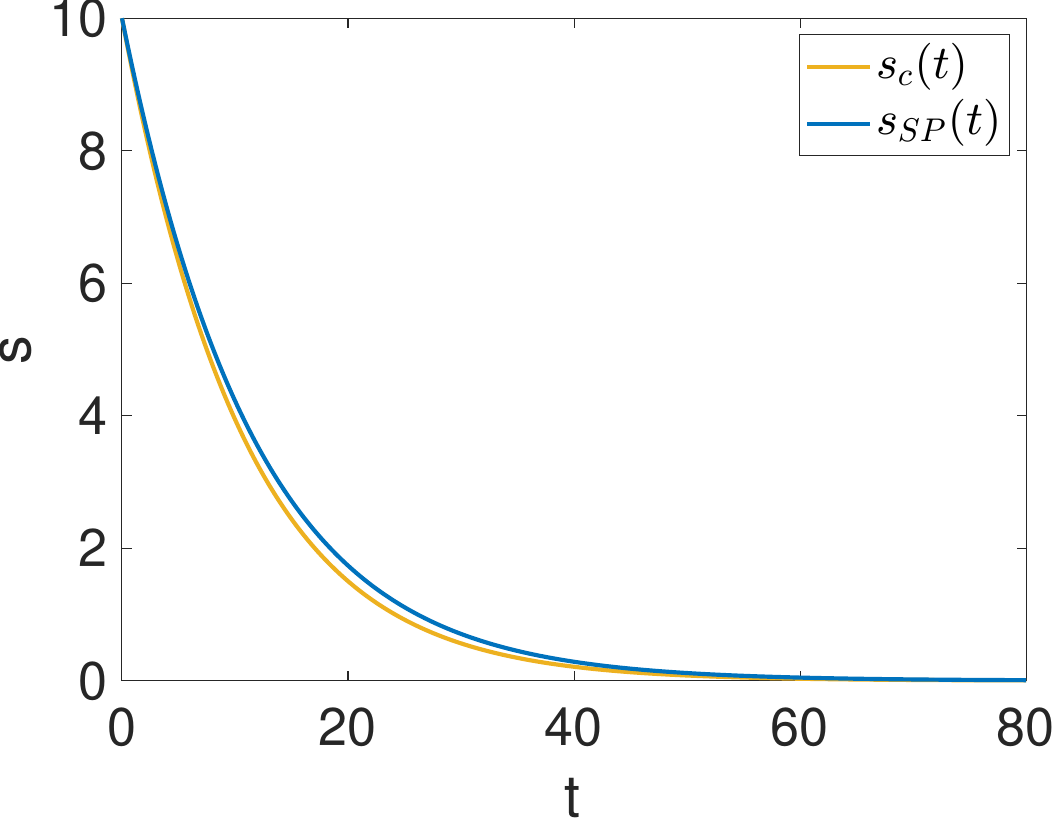}
     \end{subfigure}
     \qquad
     \begin{subfigure}
         \centering
         \includegraphics[scale = 0.38]{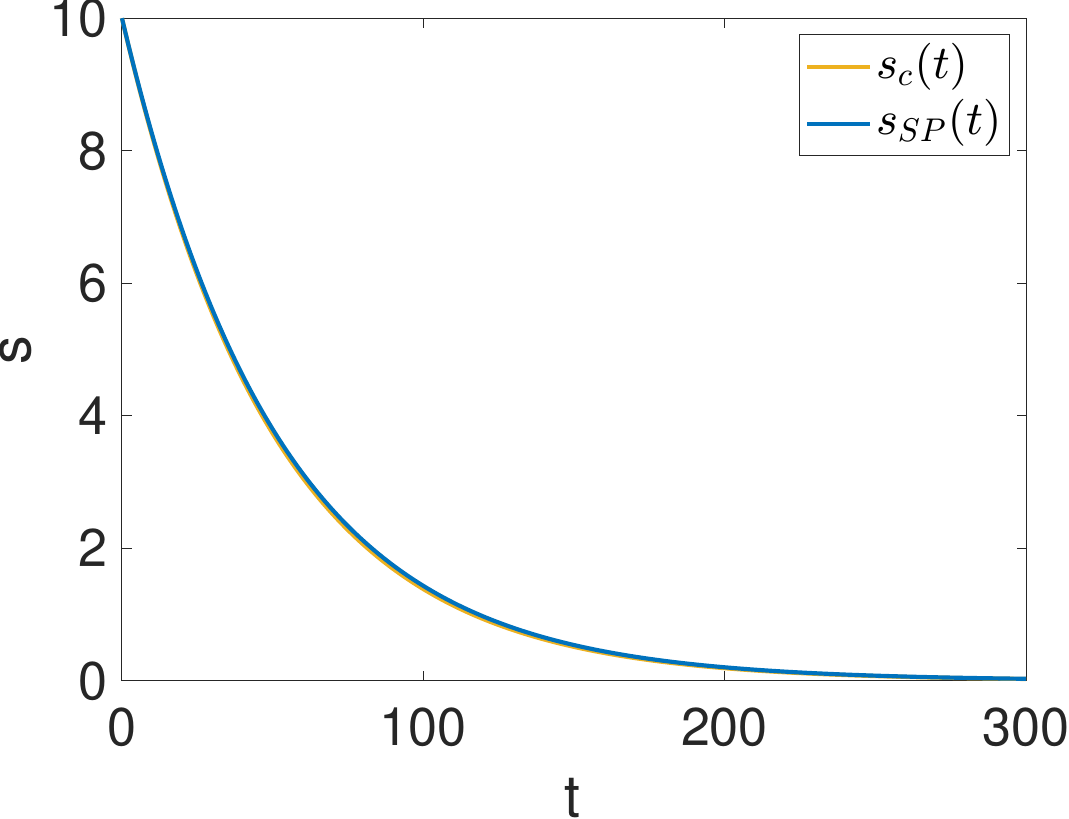}
     \end{subfigure}  
    \caption{\textbf{The sQSSA~\eqref{sQSSA2} and the slow product QSSA~\eqref{slowp} are 
    indistinguishable for large $k_{-1}$.} \textbf{Left}: $s_0 = 10$, $c_0 = 0$, $e_0 = 10$, $k_1 = 1$, 
    $k_{-1} = 100$, $k_2 = 1.$ \textbf{Right}: $s_0 = 10$, $c_0 = 0$, $e_0 = 10$, $k_1 = 1$, 
    $k_{-1} = 500$, $k_2 = 1.$ Substrate depletion over time is shown in both panels. $s_c(t)$ is 
    the numerical solution to \eqref{sQSSA2} and $s_{SP}(t)$ is the numerical solution to \eqref{slowp}.}
    \label{fig:largeKinverse}
\end{figure}

This example demonstrates that different TFPVs can share a critical manifold and be close in parameter 
space. For instance, consider a scenario where $k_1e_0$ and $k_2$ are relatively small compared to $k_{-1}$.
In this case, the TFPVs $(0,k_1,k_{-1},k_2)^T$ or $(e_0,k_1,k_{-1},0)^T$ could be quite close to each other. 
Terminating the approach to one TFPV might inadvertently lead to a point in parameter space that is 
actually closer to another TFPV.

Numerical simulations ({\sc Figure}~\ref{fig:largeKinverse}) illustrate this phenomenon. When $k_{-1}$ is 
significantly large, the trajectories of the system under the sQSSA~\eqref{sQSSA2} and the slow product 
QSSA~\eqref{slowp} become virtually indistinguishable. This makes it challenging to definitively determine 
which reduction is more accurate or appropriate for the given parameter values.

Moreover, the presence of multiple TFPVs introduces ambiguity in assessing the validity of the sQSSA. For 
example, if $e_0=1$, $k_1=0.1$, $k_{-1}=10$, and $k_2=0.1$, it is unclear whether the system's behavior
is better approximated by the sQSSA associated with the TFPV $(0,k_1,k_{-1},k_2)^T$ or the slow 
product QSSA associated with $(e_0,k_1,k_{-1},0)^T$.  This ambiguity arises because the parameter values 
do not fall neatly within the domain of a single TFPV, making it difficult to determine which reduction 
provides the most accurate representation of the system's dynamics. Thus, we define the 
\textit{predominance} of a given QSS reduction within the class of QSS approximations derived from Fenichel 
theory for nearby TFPVs. Established validity conditions, such as the Reich-Sel'kov condition, can often 
guide us toward identifying TFPVs with overlapping validity conditions, facilitating a robust analysis 
of predominance. However, in general, determining proximity to different TFPVs remains challenging due 
to the varying units of the parameters, necessitating a dimensionless indicator.

\textbf{Remark}: The approximation of the slow manifold is a well-studied area, with 
methods extending beyond Fenichel's reduction theory. For instance, the Fraser iterative method provides 
approximations for slow manifolds with small curvatures \cite{fraser1988steady, Nguyen1989, kaper2002asymptotic}. 
In such cases, incorporating additional iterative terms can enhance the accuracy of the approximations. 
Furthermore, several studies employ integration-based methods to generate numerical approximations of 
the slow flow within a specified error tolerance \cite{Davis1999}. In our work, we focus on other 
Fenichel reductions to examine the predominance of the standard QSSA, given that their convergence to 
the slow flow has been extensively established for the Michaelis-Menten mechanism.

The fact that the magnitude of $e_0/K_M$ alone does not ensure the sQSSA \eqref{sQSSA2} is the 
\textit{predominant} reduction is a consequence of having to always operate away from the singular limit 
in a large parameter space. This naturally leads to the question: does there exist a dimensionless 
parameter whose magnitude not only ensures the accuracy of the sQSSA \eqref{sQSSA} but also its predominance 
among other known QSSAs? In other words, can we define a more comprehensive and robust notion of 
``validity" that encompasses both accuracy and predominance?

\section{Anti-funnels and the Slow Invariant Manifold}\label{sec:preliminaries}
As discussed in Section~\ref{SEC2}, a central challenge in applying Fenichel theory to the MM system 
is the potential ambiguity arising from different TFPVs sharing the same critical manifold. This can 
lead to uncertainty about the validity of a specific reduction away from the singular limit. 
To address this, we turn to qualitative methods that allow us to determine the location of the slow 
manifold, $S_0^{\varepsilon}$, relative to known curves in the phase-plane. This section introduces 
the anti-funnel theorem, adapted from \citet{hubbard1997differential} and following the approach 
of \citet{calder2008properties}. We outline a strategy to determine when the sQSSA \eqref{sQSSA} is 
not only accurate but also predominant.

\begin{definition}\label{def:fences}
\textbf{Fences and anti-funnels.} Consider the first-order differential equation $y' = f(x,y)$ over 
the interval $x \in I = [a,b)$ where $a<b \leq \infty$ and ``$\phantom{x}'$" denotes differentiation 
with respect to $x$. Let $\alpha$ and $\beta$ be continuously-differentiable functions that satisfy 
\begin{equation}\label{funnel}
\alpha'(x) \leq f(x,\alpha(x)) \; \text{and} \; f(x,\beta(x)) \leq \beta'(x)
\end{equation}
for all $x \in I$.
\begin{itemize}
\item [(a)] The curve $\alpha$ is a \textit{lower fence} and the curve $\beta$ is an \textit{upper fence}. 
\item [(b)] The lower and upper fences are \textit{strong fences} if the respective inequalities are always strict. 
\item [(c)] The set 
\begin{equation}
\Gamma := \{(x,y) \; : \; x \in I, \; \beta(x) \leq y \leq \alpha(x)\}
\end{equation}
is called an \textit{anti-funnel}\footnote{In the $(x,y)$ phase plane, if the upper fence lies above 
the lower fence, the set between the two fences is known as a \textit{funnel}. Conversely, if the lower 
fence lies above the upper fence, the set between the two fences is called an 
\textit{antifunnel} \cite{hubbard1997differential}. Refer to \ref{sec:appendix_synopsis} for a brief 
synopsis on the theory of fences and anti-funnels and how it applies to the MM mechanism. 
See \citet{hubbard1997differential} for the detailed theory.} if  
$$\beta(x) < \alpha(x) \; \; \text{for all} \; \; x \in I.$$ 
\item [(d)] The anti-funnel is \textit{narrowing} if 
\begin{equation}
    \lim\limits_{x\rightarrow b^-} |\alpha(x) - \beta(x)| = 0.
\end{equation}    
\end{itemize}
\end{definition}

\begin{theorem}\label{thm:antifunnel}
\textbf{Anti-funnel Theorem} \cite{hubbard1997differential}. Consider the first-order differential 
equation $y' = f(x,y)$ over the interval $I = [a,b)$ where $a<b \leq \infty$. If $\Gamma$ is an 
\textit{anti-funnel} with a strong lower fence, $\alpha$, and a strong upper fence, $\beta$, then 
there exists a solution $y(x)$ to the differential equation such that 
\begin{equation}\label{fences}
    \beta(x) < y(x) < \alpha(x) \; \; \text{for all} \; \; x \in I.
\end{equation}
Furthermore, if $\Gamma$ is \textit{narrowing} and $\frac{\partial f}{\partial y} (x,y) \geq 0$ 
in $\Gamma$, then the solution $y(x)$ is unique. 
\end{theorem}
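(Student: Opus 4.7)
My plan is to establish existence by a topological separation argument on the initial value $y(a)$, and then derive uniqueness by exploiting monotonicity together with the narrowing of the anti-funnel. For existence, let $\varphi(x;y_0)$ denote the solution of $y'=f(x,y)$ with $y(a)=y_0$, and partition the initial cross-section $[\beta(a),\alpha(a)]$ into three sets: $U$, the $y_0$ whose trajectory eventually leaves $\Gamma$ through the upper boundary $\alpha$; $L$, the $y_0$ whose trajectory leaves through the lower boundary $\beta$; and $T$, the remaining initial values whose trajectories stay in $\Gamma$ throughout $I$. The crucial preliminary fact is that strong fences act as one-way barriers: if a trajectory satisfies $y(\tilde x)=\alpha(\tilde x)$ while approaching from above, then $y'(\tilde x)\leq \alpha'(\tilde x)$, contradicting the strict inequality $\alpha'(\tilde x) < f(\tilde x,\alpha(\tilde x))=y'(\tilde x)$. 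Hence once a trajectory is strictly above $\alpha$ it remains so, and symmetrically for $\beta$, making $U$ and $L$ disjoint. Continuous dependence of solutions on initial data shows $U$ and $L$ are open in $[\beta(a),\alpha(a)]$, and both are nonempty because the endpoint trajectory at $y_0=\alpha(a)$ exits upward immediately (by the strong fence inequality applied at $x=a$) and symmetrically at $y_0=\beta(a)$. Connectedness of $[\beta(a),\alpha(a)]$ then forces $T\neq\emptyset$, producing a trapped solution $y(x)$. The strict inequalities $\beta(x)<y(x)<\alpha(x)$ follow from the same barrier argument: a trapped solution cannot touch either fence without being forced to exit.

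For uniqueness under narrowing and $\partial f/\partial y \geq 0$ on $\Gamma$, let $y_1,y_2$ be two trapped solutions and set $d(x)=y_1(x)-y_2(x)$. Writing
\begin{equation*}
d'(x) = f(x,y_1(x)) - f(x,y_2(x)) = \left(\int_0^1 \frac{\partial f}{\partial y}\bigl(x,\,y_2(x)+t\, d(x)\bigr)\,dt\right) d(x),
\end{equation*}
we see $d'$ has the same sign as $d$, because the integrand is nonnegative on $\Gamma$ and the straight-line interpolant between $y_1(x)$ and $y_2(x)$ stays inside $\Gamma$. Consequently $|d|$ is non-decreasing on $I$, so if $|d(x_0)|>0$ at some $x_0\in I$ then $|d(x)|\geq |d(x_0)|>0$ for all $x\in[x_0,b)$, contradicting $|d(x)|\leq \alpha(x)-\beta(x)\to 0$ as $x\to b^-$ guaranteed by narrowing. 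Hence $d\equiv 0$ and the trapped solution is unique.

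The main obstacle I anticipate is making the one-way-barrier lemma fully rigorous, in particular ruling out a tangential contact where a trajectory kisses $\alpha$ or $\beta$ without a transverse crossing; the strict inequality in the strong fence condition closes this loophole, but the argument must be phrased at the level of derivatives at the contact point rather than by an appeal to continuous crossings. A secondary technical point is that solutions starting in $U$ or $L$ must be defined at least up to their exit time in order for the dichotomy $[\beta(a),\alpha(a)]=T\sqcup U \sqcup L$ to be well-posed; this follows from standard ODE continuation together with the fact that, prior to exiting, the trajectory is confined to the compact-on-compacts region $\Gamma$, so no finite-time blow-up can preempt the topological argument.
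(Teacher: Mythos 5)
The paper does not prove this theorem; it is quoted as a classical result from Hubbard and West, so there is no internal proof to compare against. Your argument is correct and is essentially the standard one from that source: the strict fence inequalities make $\alpha$ and $\beta$ one-way barriers even at tangential contacts, the open/disjoint/nonempty sets $U$ and $L$ cannot cover the connected cross-section $[\beta(a),\alpha(a)]$, and uniqueness follows from $\partial f/\partial y\geq 0$ forcing $|y_1-y_2|$ to be non-decreasing while narrowing forces it to zero. The only implicit hypothesis worth making explicit is that $f$ is locally Lipschitz in $y$ (as assumed in the cited source and satisfied by the rational vector field \eqref{1D} away from its singular set), since both the well-posedness of the partition $T\sqcup U\sqcup L$ and the continuous dependence used to show $U$ and $L$ are open rely on uniqueness of solutions to the initial value problem.
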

\begin{figure}[tb!]
    \centering
    \includegraphics[scale=0.45]{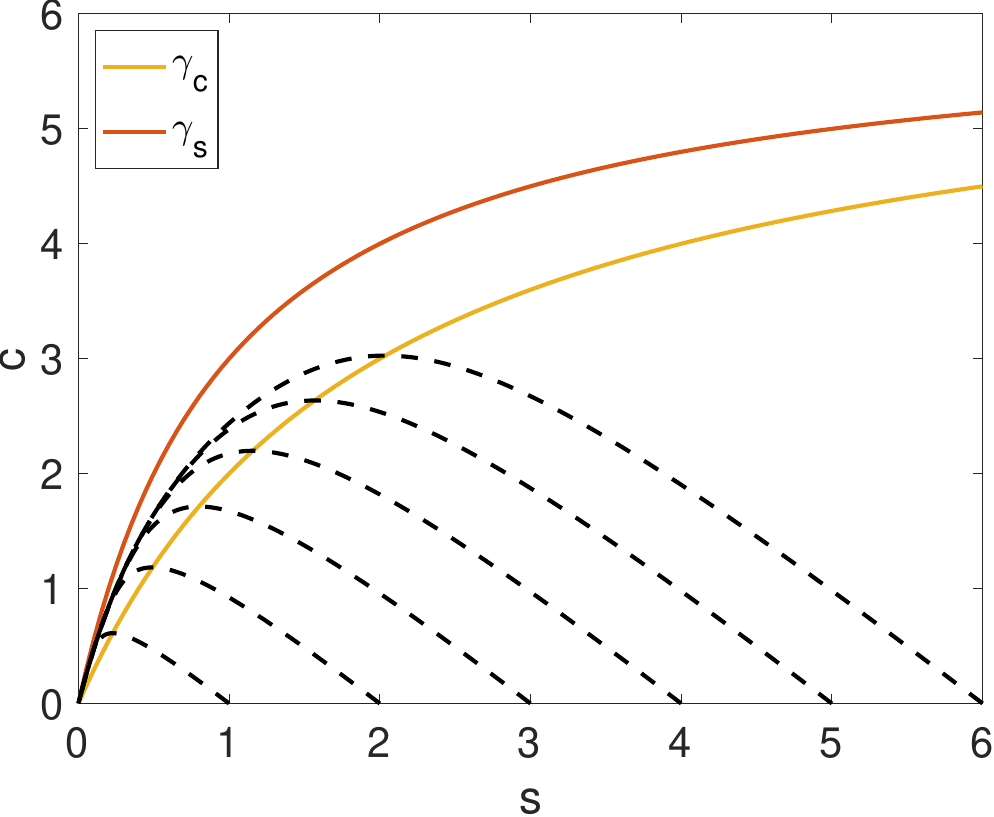}
    \caption{\textbf{The region, $\Gamma$, between the horizontal nullcline and the vertical 
    nullcline curves is a trapping region on the phase plane.} The numerical solutions to the 
    mass action equations~\eqref{MM} for several initial conditions are denoted by black dashed 
    lines. $s_0 \in \{1,2,3,4,5,6\}$, $c_0 = 0$, $e_0 = 6$, $k_1 = 1$, $k_{-1} = 1$, $k_2 = 1.$
    The numerical solution of trajectories start on the $s$-axis and eventually enter $\Gamma$.}
    \label{fig:isoclines_solutions}
\end{figure} 

For the MM equations~\eqref{mmMA}, we have:
\begin{equation}\label{1D}
\frac{\mathrm{d}c}{\mathrm{d}s} = -\cfrac{e_0s -(K_M+s)c}{e_0s-(K_S+s)c} =: f(c,s).
\end{equation}
The distinguished solution to \eqref{1D}, ``$c=y(s)$", satisfying \eqref{fences} represents the 
slow manifold, $S_0^{\varepsilon}$. The challenge lies in finding suitable lower and upper fences, 
$\alpha(s)$ and $\beta(s)$. 

A natural choice for the upper fence is the $c$-nullcline, $\gamma_c$, given by:
\begin{equation}\label{graphC}
\text{Graph}(\gamma_c)=\{(s,c) \in \mathbb{R}^2: c = e_0s/(K_M+s)\}
\end{equation}
Considering the biochemically relevant portion of $\mathbb{R}^2$, we focus on 
$\text{Graph}(\gamma_c) \cap \mathbb{R}^2_{\geq 0}$. This choice is advantageous because it 
represents the quasi-steady-state variety corresponding to the standard reduction~\eqref{sQSSA}, 
and all phase-plane trajectories starting on the s-axis eventually cross it (see 
\citet{calder2008properties} and \citet{Noethen2007} for a proof of this statement).

The set $\Gamma$ enclosed between the $c$-nullcline and the $s$-nullcline, $\gamma_s$,\footnote{The 
graph of the $s$-nullcline is the set $\{(s,c) \in \mathbb{R}^2: c = e_0s/(K_S+s)\},$} is positively 
invariant:
\begin{equation}
\Gamma :=\{(s,c)\in \mathbb{R}^2_{\geq 0} \; : \; \gamma_c(s) \leq c \leq \gamma_s(s)\}
\end{equation}
Once a trajectory enters $\Gamma$, it remains within $\Gamma$ as $t\to \infty$. (see \citet{calder2008properties} 
and \citet{Noethen2007} for a detailed proof). {\sc Figure}~\ref{fig:isoclines_solutions} illustrates 
this behavior.

Constructing a suitable lower fence requires more analysis. The stationary point $(s,c)=(0,0)$ is a node
with eigenvalues $\lambda_-<\lambda_+<0$:
\begin{equation}
\lambda_{\pm} = \cfrac{k_1}{2}(K_M+e_0)\left(-1\pm\sqrt{1-\cfrac{4Ke_0}{(K_M+e_0)^2}}\right).
\end{equation}
Under timescale separation in which
\begin{equation}
\cfrac{Ke_0}{(K_M+e_0)^2} \ll 1/4,
\end{equation}
trajectories eventually approach the stationary point along the slow eigenvector, $v^+$, spanning the one-dimensional 
subspace:
\begin{equation}\label{TangentO}
T_0S_0^{\varepsilon} = \text{span}(v^+) = \{(s,c)\in \mathbb{R}^2: c= ms\}, 
\end{equation}
with slope $m$:

\begin{equation}\label{eq:alpha}
m= \frac{1}{2k_{-1}}\left(-k_{-1} - k_2 + k_1e_0 + \sqrt{(k_{-1} + k_2 + k_1e_0)^2 - 4k_1k_2e_0}\right).
\end{equation}
\citet{calder2008properties} leverage this property and define the lower fence:
\begin{equation}\label{CSfence}
\alpha_{cs}(s) := \cfrac{me_0s}{e_0+ms},
\end{equation}
and prove that $S_0^{\varepsilon}$ lies between $\gamma_c(s)$ and $\alpha_{cs}(s)$. Notably, the graph 
of $\alpha_{cs}$ lies below that of $\gamma_s$. By proving that
\begin{equation}
\Gamma_{cs} := \{(s, c) \; : \; s\in I, \gamma_c(s) \leq c \leq  \alpha_{cs}(s)\}
\end{equation}
is a narrowing anti-funnel,\footnote{\citet{calder2008properties} define $I$ as $[a, \infty)$ where 
$a > 0$ is arbitrary.} \citet{calder2008properties} prove that the distinguished slow manifold, 
$S_0^{\varepsilon}$, also lies in $\Gamma$, the region contained between the $c$- and $s$-nullcline. 

A key feature of \eqref{CSfence} is $\alpha_{cs}'(0)=m$. This leads to highly accurate reduced equation 
for $\dot{s}$ near the stationary point:\footnote{The equation is obtained via direct substitution of 
$c=\alpha_{cs}(s)$ into \eqref{mms}.}
\begin{equation}\label{CSR}
\dot{s} = -\cfrac{e_0s}{e_0+ms} \cdot (k_1e_0 - k_{-1}m).
\end{equation}
Despite its advantages, this approach has limitations. First, the reduction~\eqref{CSR} may be unreliable 
away from the stationary point, even when $\lambda_- \ll \lambda_+ < 0$ (see ~\citet{eilertsen2022anti}). 
Second, the anti-funnel construction leads to the following:
\begin{proposition}
For \eqref{mmMA} with initial condition $(s,c)(0)=(s_0,0)$, let $t=t_{\rm cross}$ be the time the 
trajectory crosses the upper fence $c=\gamma_c(s)$. Then, the bound
\begin{equation}\label{bound1}
-\cfrac{k_2e_0s}{K_M+s} \leq \dot{s} \leq -\cfrac{e_0s}{e_0+ms} \cdot (k_1e_0 - k_{-1}m)
\end{equation}
holds for all $t\geq t_{\rm cross}$. 
\end{proposition}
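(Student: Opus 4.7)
The plan is to reduce \eqref{bound1} to the geometric statement $\gamma_c(s(t)) \le c(t) \le \alpha_{cs}(s(t))$ for $t \ge t_{\rm cross}$, and then exploit the fact that $\dot s$ is a strictly monotone function of $c$ at each fixed $s > 0$. From \eqref{mms}, $\partial \dot s/\partial c = k_1 s + k_{-1} > 0$, so $\dot s$ is strictly increasing in $c$. Direct substitution gives $\dot s|_{c=\gamma_c(s)} = -k_2 e_0 s/(K_M+s)$ and $\dot s|_{c=\alpha_{cs}(s)} = -e_0 s(k_1 e_0 - k_{-1} m)/(e_0+ms)$, which are precisely the two endpoints of \eqref{bound1}. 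Thus, once the trajectory is confined to $\Gamma_{cs}$, the bound follows immediately by monotonicity.

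To rule out re-crossing below $\gamma_c$, set $w(t) := c(t) - \gamma_c(s(t))$. At $t_{\rm cross}$ we have $w = 0$, and whenever $w = 0$ we have $\dot c = 0$ (by definition of the $c$-nullcline), $\gamma_c'(s) = e_0 K_M/(K_M+s)^2 > 0$, and $\dot s < 0$ throughout $\Gamma$; thus $\dot w = -\gamma_c'(s)\dot s > 0$ on $\{w=0\}$, and a standard first-crossing argument gives $c(t) \ge \gamma_c(s(t))$ for all $t \ge t_{\rm cross}$. For the matching upper confinement, note that $\gamma_c \le \alpha_{cs}$ on $\Gamma_{cs}$, so the trajectory enters $\Gamma_{cs}$ at $t_{\rm cross}$ strictly beneath $\alpha_{cs}$. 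Put $u(t) := \alpha_{cs}(s(t)) - c(t)$ and suppose, toward a contradiction, that $u$ first vanishes at some $t^\ast > t_{\rm cross}$, forcing $\dot u(t^\ast) \le 0$. Along the curve $c = \alpha_{cs}(s)$, the strict lower-fence inequality established by \citet{calder2008properties} for \eqref{1D} reads $\alpha_{cs}'(s) < \dot c/\dot s$; because $\alpha_{cs}(s) < \gamma_s(s)$ forces $\dot s < 0$ on this curve, multiplying through reverses the inequality and yields $\alpha_{cs}'(s)\dot s - \dot c > 0$, i.e.\ $\dot u > 0$ at $t^\ast$, a contradiction. Therefore $c(t) \le \alpha_{cs}(s(t))$ for all $t \ge t_{\rm cross}$; combining with the previous step places the trajectory in $\Gamma_{cs}$ and yields \eqref{bound1}.

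The containment beneath $\alpha_{cs}$ is the only genuinely delicate step. It draws on both the strict lower-fence inequality of \citet{calder2008properties} — which is where the specific slope $m$ from \eqref{eq:alpha} is used — and the sign condition $\dot s < 0$ that the nested ordering $\gamma_c \le \alpha_{cs} \le \gamma_s$ guarantees along the fence. The remaining ingredients (monotonicity of $\dot s$ in $c$, the two nullcline evaluations, and the no-return argument for $\gamma_c$) are essentially bookkeeping.
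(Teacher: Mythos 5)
Your proposal is correct and follows essentially the same route as the paper: confine the trajectory to $\Gamma_{cs}$ for $t\geq t_{\rm cross}$, then use the fact that $\dot{s}=-k_1e_0s+k_1(K_S+s)c$ is increasing in $c$ and evaluate at the two bounding curves $\gamma_c$ and $\alpha_{cs}$. The only difference is that you prove the confinement directly via first-crossing arguments at both fences, whereas the paper simply invokes the positive invariance of $\Gamma$ and the anti-funnel structure of $\Gamma_{cs}$ established by \citet{calder2008properties}; your version is a self-contained elaboration of the same step.
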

\begin{proof}
Since $\Gamma_{cs}$ is a narrowing anti-funnel, $S_0^{\varepsilon}$ lies between the graphs of 
$\alpha_{cs}(s)$ and $\gamma_c(s)$. As $S_0^{\varepsilon}$ is invariant, any trajectory entering 
$\Gamma_{cs}$ cannot cross it. Therefore, $\beta(s) \leq c \leq \alpha_{cs}(s)$ holds for all 
$t\geq t_{\rm cross}$ since $\Gamma$ is positively invariant and $\Gamma_{cs} \subset \Gamma$. 
With $\dot{s} = -k_1e_0s + k_1(K_S+s)c$, it follows that
\begin{equation*}
-k_1e_0s + k_1(K_S+s)\gamma_c(s) \leq \dot{s} \leq -k_1e_0s + k_1(K_S+s)\alpha_{cs}(s)
\end{equation*}
proving the assertion.
\end{proof}

While the upper bound in \eqref{bound1} is sharper than $\dot{s} \leq 0$, extracting quantitative 
information about the sQSSA's accuracy and dimensionless parameters like $\varepsilon_{RS} = e_0/K_M$ 
from \eqref{bound1} is not straightforward.

The goal is to define an anti-funnel with fences that provide both qualitative and quantitative 
information about $S_0^{\varepsilon}$ and the error for a given QSSA. By ``trapping" 
$S_0^{\varepsilon}$ between suitable upper and lower fences that form narrowing anti-funnels, we 
can derive qualifiers that determine the accuracy and predominance of various QSSAs.

\section{Trapping the Slow Manifold: The Standard QSSA}\label{sec:trapping}
As stated in Section \ref{sec:preliminaries}, our strategy is to construct upper and lower fences 
that form a narrowing anti-funnel. This raises the question: How do we construct suitable fences? 
One approach is approximating the slow manifold, $S_0^{\varepsilon}$, via perturbation expansion
\begin{equation}\label{Texp}
S_0^{\varepsilon}= h_0(s)+\varepsilon h_1(s) + \varepsilon^2h_2(s) + \mathcal{O}(\varepsilon^2).
\end{equation}
Insertion of \eqref{Texp} into the invariance equation results in a \textit{regular} perturbation 
problem. However, the coefficients in \eqref{Texp} depend on $\varepsilon$, which in turn depends 
on a  specific TFPV. Since we aim for general error bounds, we need fences that are not heavily 
$\varepsilon$-dependent.

An alternative approach involves combining quantitative reasoning with creative insights. 
\citet{kumar2011reduced} sought to approximate the flow on $S_0^{\varepsilon}$ for small 
$k_2$ ($k_2 \mapsto \varepsilon \widehat{k}_2$). They reasoned that
\begin{equation}\label{ASM}
c=\cfrac{e_0s}{K_M+s}
\end{equation}
is a good approximation to $S_0^{\varepsilon}$ when $k_2$ is small. Under this assumption, the 
flow of $c$ on $S_0^{\varepsilon}$ is determined by:
\begin{equation}\label{cdotk2}
\dot{c} \approx \cfrac{{\rm d}}{{\rm d}s}\left(\cfrac{e_0s}{K_M+s}\right)\dot{s} = \cfrac{K_Me_0}{(K_M+s)^2}\cdot \dot{s}.
\end{equation}
Substituting \eqref{cdotk2} and \eqref{ASM} into the conservation law, $\dot{s} + \dot{c} + k_2c=0$, 
and solving for $\dot{s}$ yields:
\begin{equation}\label{eq:improvedQSSreduction}
    \dot{s} = -\frac{k_2e_0s(K_M+s)}{K_Me_0 + (K_M+s)^2}.
\end{equation}

This reduction is significant for two reasons. First, it effectively captures the Fenichel 
reduction for this scenario (compare to \eqref{slowp} with small $k_2$). Second, the graph of $\gamma(s)$, where
\begin{equation}\label{eq:improvedQSS}
\gamma: s\mapsto \frac{e_0s}{K_S+s} - \frac{k_2e_0s(K_M+s)}{k_1(K_S+s)(K_Me_0 + (K_M+s)^2)},
\end{equation}
is an excellent candidate for an appropriate anti-funnel fence. 

\begin{figure}[tb!]
   \centering
   \includegraphics[scale=0.45]{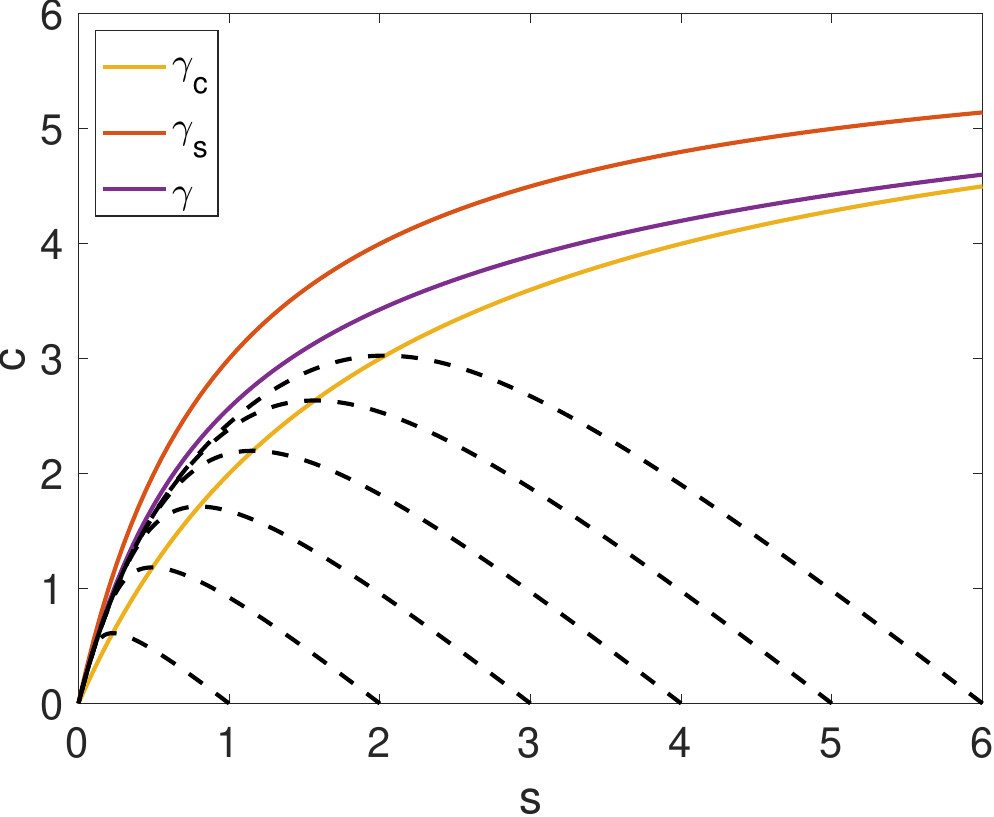}
   \caption{\textbf{The region between the horizontal nullcline curve $\gamma_c(s)$ and the 
   curve $\gamma(s)$ is a trapping region on the phase plane.} The numerical solutions to 
   the mass action equations~\eqref{MM} for several initial conditions are denoted by black 
   dashed lines. $s_0 \in \{1,2,3,4,5,6\}$, $c_0 = 0$, $e_0 = 6$, $k_1 = 1$, $k_{-1} = 1$, 
   $k_2 = 1.$}
    \label{fig:isoclines_improved_solutions}
\end{figure} 

As shown in Figure \ref{fig:isoclines_improved_solutions}, $\gamma_c, \gamma_s$ and $\gamma$ 
are strictly increasing, with $\gamma_c < \gamma < \gamma_s$ for all $s > 0$. The set contained 
between the graphs of $\gamma_c$ and $\gamma$ is positively invariant:
\begin{equation}\label{gamma}
    \Gamma_0 := \{(s,c)\in \mathbb{R}^2_{\geq 0} \; : \; \gamma_c(s) \leq c \leq \gamma(s)\}.
\end{equation}
Let $t_{\rm cross}$ be the time a trajectory crosses the $c$-nullcline, as established in
\cite{Noethen2007,calder2008properties}. Then, $(s,c)(t) \in \Gamma_0$ for all $t \geq t_{\rm cross}$.
Importantly:

\begin{theorem}\label{thm:trapping_gamma_0}
For the differential equation~\eqref{1D}, $\Gamma_0$ is a positively invariant, narrowing 
anti-funnel containing a unique slow manifold, $S_0^{\varepsilon}$.
\end{theorem}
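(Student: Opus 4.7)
The plan is to verify the three hypotheses of Theorem~\ref{thm:antifunnel} for the curves $\gamma_c$ (upper fence) and $\gamma$ (lower fence), regarded as functions of $s$ on an interval of the form $[a,\infty)$ with $a>0$, and then extract uniqueness from the sign of $\partial f/\partial c$ on $\Gamma_0$.

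\textbf{Step 1 (ordering).} I would first establish $\gamma_c(s)<\gamma(s)<\gamma_s(s)$ for all $s>0$. The upper bound $\gamma(s)<\gamma_s(s)$ is immediate, since $\gamma$ is defined as $\gamma_s$ minus a strictly positive rational term. For $\gamma_c(s)<\gamma(s)$, I would subtract and combine over the common denominator $(K_M+s)(K_S+s)\bigl(K_Me_0+(K_M+s)^2\bigr)$; the resulting numerator is a polynomial in $s$ with coefficients depending on $e_0,k_1,k_{-1},k_2$, and using $K_M=K_S+K$ one can verify that every coefficient is nonnegative, with at least one strictly positive term. This ensures $\Gamma_0$ is a well-defined anti-funnel.

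\textbf{Step 2 (fence inequalities).} The upper fence is essentially free: on $c=\gamma_c(s)$ the numerator of $f(c,s)$ in \eqref{1D} vanishes by definition, so $f(\gamma_c(s),s)=0$, whereas $\gamma_c'(s)=K_Me_0/(K_M+s)^2>0$. Hence $\gamma_c$ is a strong upper fence in the sense of Definition~\ref{def:fences}. The lower-fence inequality $\gamma'(s)<f(\gamma(s),s)$ is the substantive step. Since $\gamma_c(s)<\gamma(s)<\gamma_s(s)$, on $c=\gamma(s)$ we have $\dot s<0$ and $\dot c<0$, so $f(\gamma(s),s)>0$; and using the conservation law $\dot s+\dot c=-k_2 c$ one obtains
\begin{equation*}
f(\gamma(s),s)=1+\frac{k_2\gamma(s)}{-\dot s\big|_{c=\gamma(s)}},\qquad -\dot s\big|_{c=\gamma(s)}=k_1e_0s-k_1(K_S+s)\gamma(s).
\end{equation*}
Substituting the explicit form \eqref{eq:improvedQSS}, the required inequality $\gamma'(s)<f(\gamma(s),s)$ reduces to positivity of a rational function whose denominator is manifestly positive. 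I would then show that the numerator is a polynomial in $s$ with nonnegative coefficients by tracking each factor: the construction of $\gamma$ from the improved QSS reduction \eqref{eq:improvedQSSreduction} ensures that the leading ``first-order'' discrepancy cancels, leaving a strictly positive remainder.

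\textbf{Step 3 (narrowing and uniqueness).} As $s\to\infty$, $\gamma_c(s)\to e_0$; and in $\gamma(s)$ the first term tends to $e_0$ while the correction decays like $K e_0/s$. Hence $|\gamma(s)-\gamma_c(s)|\to 0$, so $\Gamma_0$ is narrowing. For uniqueness I would compute
\begin{equation*}
\frac{\partial f}{\partial c}(c,s)=\frac{(K_M+s)\bigl[e_0s-(K_S+s)c\bigr]-(K_S+s)\bigl[e_0s-(K_M+s)c\bigr]}{\bigl[e_0s-(K_S+s)c\bigr]^{2}}=\frac{K e_0 s}{\bigl[e_0s-(K_S+s)c\bigr]^{2}}\geq 0,
\end{equation*}
which is nonnegative everywhere it is defined. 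Together with the narrowing anti-funnel, Theorem~\ref{thm:antifunnel} then yields a unique solution $c=h(s)$ trapped in $\Gamma_0$; invariance under the flow identifies this solution with the slow manifold $S_0^{\varepsilon}$.

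\textbf{Main obstacle.} The delicate point is the lower-fence inequality in Step~2. The curve $\gamma$ was engineered as an approximation to $S_0^{\varepsilon}$ rather than a true sub- or super-solution, so establishing strict one-sided dominance of the vector field over $\gamma'$ requires careful polynomial bookkeeping. I expect it to reduce to showing nonnegativity of a polynomial of degree four or five in $s$, which will succeed precisely because the construction of $\gamma$ kills the dominant error term in the invariance equation.
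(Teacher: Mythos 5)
Your proposal follows essentially the same route as the paper: verify that $\gamma_c$ and $\gamma$ are strong upper and lower fences by direct computation (the upper fence being immediate since $f$ vanishes on the $c$-nullcline while $\gamma_c'>0$, the lower fence reducing to positivity of a polynomial after clearing denominators, which is exactly the $\delta(s)>0$ calculation the paper relegates to its appendix), then conclude via the narrowing property and $\partial f/\partial c = Ke_0s/[e_0s-(K_S+s)c]^2\geq 0$ from the anti-funnel theorem. One algebra slip to correct before carrying out the Step~2 bookkeeping: the conservation law $\dot{s}+\dot{c}+k_2c=0$ with $\dot{s}<0$ gives $f(\gamma(s),s)=-1+k_2\gamma(s)/\bigl(-\dot{s}\big|_{c=\gamma(s)}\bigr)$, not $+1+\cdots$; with the wrong sign the target inequality $\gamma'(s)<f(\gamma(s),s)$ is weakened by $2$ and verifying it would not establish the fence property.
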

\begin{proof}
To prove $\Gamma_0$ is positively invariant, we show that the vector field for \eqref{mmMA} 
points inwards at the boundary curves:
\begin{subequations}\label{eq:vectorfieldineq}
    \begin{align}
        &\dot{c}(s,\gamma(s)) \;- \gamma'(s)\dot{s}(s,\gamma(s)) \;< 0 \label{eq:ineq1}, \\
        &\dot{c}(s,\gamma_c(s)) - \gamma_c'(s)\dot{s}(s,\gamma_c(s)) > 0 \label{eq:ineq2}.
    \end{align}
\end{subequations}
See \ref{sec:appendix_trapping1} for details. Positive invariance implies trajectories entering
$\Gamma_0$ remain within. \citet{calder2008properties}, as well as~\citet{Noethen2007}, prove that for solutions initializing on 
the $s$-axis, there exists a $t_{\rm cross} > 0$ at which they enter $\Gamma_0$ by crossing $\gamma_c$.
    
From \eqref{eq:vectorfieldineq}, $\gamma$ and $\gamma_c$ are strong lower and upper fences, 
respectively, with $\gamma(s) > \gamma_c(s)$ for $s > 0$. Furthermore, 
\begin{equation}
\lim\limits_{s \rightarrow \infty} |\gamma(s) - \gamma_c(s)| = 0 \quad \text{and}  \quad \frac{\partial f}{\partial c} (c,s) \geq 0
\end{equation} 
in $\Gamma_0$. The assertion follows from 
Theorem ~\ref{thm:antifunnel}. 
\end{proof}

Theorem~\ref{thm:trapping_gamma_0} establishes that $\Gamma_0$ contains the slow manifold 
$S_0^{\varepsilon}$, independent of any specific perturbation scenario. This allows us to 
extract qualitative and quantitative information about the accuracy of a QSSA and assess 
predominance when multiple TFPVs are close.

The rest of this section is organized as follows. In Section~\ref{sec:trapping_sqss}, we use 
the improved trapping region to derive a quantitative error estimate for the sQSS approximation 
and recover the Reich-Sel'kov condition. In Section~\ref{sec:trapping_rQSS}, we explore the 
conditions where both the sQSS and reverse QSS reductions are valid, examining their 
validity as we move away from the Reich-Sel'kov condition.

\subsection{Accuracy of the Standard QSSA}\label{sec:trapping_sqss}
To leverage $\Gamma_0$'s properties, we introduce sharp bounds on the substrate depletion 
for $t \geq t_{\rm cross}$, representing the slow regime.
\begin{proposition}\label{prop:substratebound}
For system~\eqref{mmMA} with initial condition $(s, c)(0) = (s_0, 0)$, the bound
\begin{equation}\label{eq:substratebound}
    -\frac{k_2e_0s}{K_M + s} \leq \dot{s} \leq -\frac{k_2e_0s(K_M+s)}{K_Me_0 + (K_M+s)^2}
\end{equation}
holds for all $t \geq t_{\rm cross}$.
\end{proposition}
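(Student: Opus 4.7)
The plan is to exploit Theorem~\ref{thm:trapping_gamma_0} directly: since $\Gamma_0$ is positively invariant, any trajectory that has crossed $\gamma_c$ at time $t_{\rm cross}$ must satisfy
\begin{equation*}
\gamma_c(s(t)) \;\leq\; c(t) \;\leq\; \gamma(s(t)) \qquad \text{for all } t \geq t_{\rm cross}.
\end{equation*}
The known fact that trajectories starting on the $s$-axis cross $\gamma_c$ in finite time (\citet{calder2008properties}, \citet{Noethen2007}) guarantees that such a $t_{\rm cross}$ exists.

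Next, I would rewrite the substrate equation~\eqref{mms} in the affine-in-$c$ form
\begin{equation*}
\dot{s} \;=\; -k_1 e_0 s + (k_1 s + k_{-1})\,c,
\end{equation*}
whose $c$-derivative $k_1 s + k_{-1}$ is strictly positive for $s\geq 0$. Consequently, $\dot{s}$ is a monotonically increasing function of $c$ at fixed $s$, so the trapping inequalities above translate directly into
\begin{equation*}
\dot{s}\big|_{c=\gamma_c(s)} \;\leq\; \dot{s} \;\leq\; \dot{s}\big|_{c=\gamma(s)}.
\end{equation*}

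It then remains to identify the two endpoints with the stated expressions. For the lower bound, on the $c$-nullcline $\dot{c}=0$, so $k_1(e_0-c)s = (k_{-1}+k_2)c$, which reduces $\dot{s}$ to $-k_2 c = -k_2 e_0 s/(K_M+s)$, i.e., the sQSSA rate. For the upper bound, I would substitute the definition~\eqref{eq:improvedQSS} of $\gamma(s)$ into $\dot{s} = -k_1 e_0 s + k_1(K_S+s)c$; the leading $e_0 s/(K_S+s)$ piece of $\gamma$ exactly cancels $-k_1 e_0 s$, leaving $-k_2 e_0 s(K_M+s)/[K_M e_0 + (K_M+s)^2]$, which is precisely~\eqref{eq:improvedQSSreduction}.

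There is no real obstacle: the heavy lifting has already been done in Theorem~\ref{thm:trapping_gamma_0} (establishing the anti-funnel) and in the derivation of~\eqref{eq:improvedQSSreduction}. The only care required is the monotonicity observation that justifies passing from $c$-bounds to $\dot{s}$-bounds; this depends on $k_1 s + k_{-1} > 0$, which is automatic on the biochemically relevant half-plane $s \geq 0$.
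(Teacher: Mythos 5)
Your proposal is correct and follows essentially the same route as the paper, which simply substitutes the trapping bounds $\gamma_c(s)\leq c\leq\gamma(s)$ from $\Gamma_0$ into \eqref{mms}; you merely make explicit the monotonicity of $\dot{s}$ in $c$ (via the positive coefficient $k_1s+k_{-1}$) and the endpoint evaluations that the paper leaves implicit. No gaps.
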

\begin{proof}
The proof follows from the construction of $\Gamma_0$ and substituting 
$\gamma_c(s) \leq c \leq \gamma(s)$ into \eqref{mms}.
\end{proof}

Since the LHS of \eqref{eq:substratebound} is the sQSSA for substrate depletion, the design 
of $\Gamma_0$ suggests that the sQSSA performs 
well when $\gamma$ and $\gamma_c$ are close. Rewriting \eqref{eq:substratebound} as: 
\begin{equation}\label{sbound}
       -\frac{k_2e_0s}{K_M+s} \leq \dot{s} \leq 
       -\frac{k_2e_0s}{K_M+s}\left(\frac{1}{1 + \varphi(s)}\right), \quad \varphi(s) = \cfrac{K_Me_0}{(K_M+s)^2}
\end{equation}
we see that, for any subinterval $\ell \subset [0,s_0]$, 
\begin{equation}
-\frac{k_2e_0s}{K_M+s} \leq \dot{s} \leq 
       -\frac{k_2e_0s}{K_M+s}\left(\frac{1}{1 + \displaystyle \max_{s\in \ell}\varphi(s)}\right),\quad t_{\rm cross} \leq t.
\end{equation}
Since
\begin{equation}
\displaystyle \max_{0\leq s \leq s_0}\varphi(s) = \cfrac{e_0}{K_M} = \varepsilon_{RS},
\end{equation}
we have:
\begin{equation}\label{EST}
-\frac{k_2e_0s}{K_M+s} \leq \dot{s} \leq 
       -\frac{k_2e_0s}{K_M+s}\cdot \sum_{j=0}^{\infty}(-1)^j\varepsilon_{RS}^j, \quad t_{\rm cross}\leq t, \quad \varepsilon_{RS} < 1.
\end{equation}

This analysis has several significant consequences. First, we have established the sQSSA's asymptotic 
dependency on $\varepsilon_{RS}$ without relying on non-dimensionalization, unlike 
traditional methods \cite[see][among others]{Heineken1967,reich1974mathematical,Segel1989}. 
This is advantageous due to the non-uniqueness of non-dimensionalization, which can lead to 
different ``small parameters" \cite{Heineken1967, reich1974mathematical, Segel1988}. Second, 
the bound \eqref{EST} is more aesthetic than the estimate in \citet{eilertsen2024rigorous}:
\begin{equation}\label{ESW}
-\frac{k_2e_0s}{K_M+s} \leq \dot{s} \leq 
       -(1-m)\frac{k_2e_0s}{K_M+s}, \quad t_{\rm cross}\leq t,
\end{equation}
where ``$m$"  is the slope of the slow tangent vector, $v^+$. 

Third, and this is somewhat unexpected, we gain more information about 
the approximation error of the flow on $S_0^{\varepsilon}$ from $\varphi(s)$ than from $m$.
From \eqref{sbound}, the sQSSA approximates the flow on $S_0^{\varepsilon}$ well in the
regions where:
\begin{equation}
K_Me_0 \ll (K_M+s)^2,
\end{equation}
even if $\varepsilon_{RS} \sim 1$. In particular, the sQSSA is a good approximation in 
regions where:
\begin{equation}
\max\{K_M,e_0\} \ll s.
\end{equation}
This relates to the validity of the reverse quasi-steady-state approximation (rQSSA), which 
operates in high enzyme concentrations. To revisit the sQSSA's predominance, we examine the 
rQSSA's validity and investigate any overlap in their conditions.

\subsection{Validity of the sQSSA for High Enzyme Concentrations}\label{sec:trapping_rQSS}
Equation \eqref{sbound} implies that the sQSSA is an excellent approximation to the flow on 
$S_0^{\varepsilon}$ when $K_M \ll s$ and $e_0 \lesssim s$. In this region, the graphs of 
the $c$-nullcline and $\gamma(s)$ approach their horizontal asymptote, $c=e_0$, reflected 
in $\varphi(s) \to 0$ as $s\to \infty$. The slow manifold, $S_0^{\varepsilon}$, is nearly
horizontal implying $\dot{c} \approx 0$ on $S_0^{\varepsilon}$ even for larger 
$\varepsilon_{RS}$.  

Two singular perturbation scenarios lead to a nearly horizontal slow manifold: the sQSSA, 
where the slow manifold coalesces with the $s$-axis as $k_1e_0 \to 0$, and the rQSSA, 
coinciding with small $k_{-1}$ and small $k_{2}$:
\begin{subequations}
\begin{align}
\dot{s} &= -k_1(e_0-c)s+\varepsilon \widehat{k}_{-1}c,\\
\dot{c} &= \;\;\;k_1(e_0-c)s -\varepsilon \widehat{k}_{-1} -\varepsilon \widehat{k}_2.
\end{align}
\end{subequations}
In the rQSSA scenario, the set of stationary points in the singular limit is not a 
submanifold of $\mathbb{R}^2$:
\begin{equation*}
S_0 = S_0^{(1)} \cup S_0^{(2)} 
\end{equation*}
where\footnote{The constants $\zeta_1,\zeta_2$ are introduced to enforce the compactness of $S_0^{(1)}$ and $S_0^{(2)}$.}
\begin{subequations}
\begin{align}
S_0^{(1)} &= \{(s,c)\in \mathbb{R}^2_{\geq 0}: s=0\;\&\;0 \leq c\leq e_0-\zeta_1\}, \quad 0 < \zeta_1 < e_0,\\
S_0^{(2)} &= \{(s,c)\in \mathbb{R}^2_{\geq 0}: c=e_0\;\&\;\zeta_2 \leq s \leq s_0\}, \;\qquad 0 < \zeta_2 < s_0.
\end{align}
\end{subequations}
Classical Fenichel theory does not apply to the entire set, but it applies to specific 
compact submanifolds. The resulting Fenichel reduction via projection onto $TS_0^{(2)}$ is:
\begin{subequations}\label{sred}
\begin{align}
\dot{s} &= -k_2e_0,\\
\dot{c} &= 0.
\end{align}
\end{subequations}
Likewise, the Fenichel reduction obtained via projection onto $TS_0^{(1)}$ is:
\begin{subequations}\label{cred}
\begin{align}
\dot{s} &= 0,\\
\dot{c} &= -k_2c.
\end{align}
\end{subequations}

In the rQSSA, trajectories initially approach the line $c=e_0$ and stay close until 
reaching the vicinity of the transcritical bifurcation point, $(s,c)=(0,e_0)$. Near 
this point, trajectories approach $(0,0)$ as $t\to \infty$; however, the slow eigenvector 
in this scenario is nearly indistinguishable from the $c$-axis (and in fact aligns 
with the $c$-axis in the singular limit). 

The rQSSA's long-time validity requires $\varepsilon_{RS}^{-1} \ll 1$, while the 
sQSSA's long-time validity requires $\varepsilon_{RS} \ll 1$~\cite{EILERTSEN2020}. 
However, through comparison, the sQSSA with \eqref{sred} reveals that they are practically 
indistinguishable when $K_M \ll e_0 \lesssim s$. Thus, the sQSSA can approximate the 
flow on $S_0^{\varepsilon}$ to the right of the bifurcation point, extending its 
validity beyond the Reich-Sel'kov parameter.

A natural question is: If we use the sQSSA to approximate the flow on $S_0^{\varepsilon}$ 
to the right of the  bifurcation instead of \eqref{sred}, how reliable is it in an 
asymptotic sense? Since $S_0^{\varepsilon}$ lies within $\Gamma_0$ when $0<\varepsilon$, 
we can get a rough answer. Assuming $\varepsilon_{RS}^{-1} \ll 1$ and considering 
$s\in [e_0,s_0]$, it follows from \eqref{sbound} that, for $t\geq t_{\rm cross}$,
\begin{equation}\label{sbound2}
\begin{aligned}
-\cfrac{k_2e_0s}{K_M+s} \leq \dot{s} &\leq -\cfrac{k_2e_0s}{K_M+s}\left( \cfrac{1}{1+\displaystyle \max_{s\in [e_0,s_0]}\varphi(s)}\right)\\
&= -\cfrac{k_2e_0s}{K_M+s}\left( \cfrac{1}{1+\displaystyle \mu}\right), \quad \mu:=\cfrac{\varepsilon_{RS}}{(1+\varepsilon_{RS})^2} < \varepsilon_{RS}^{-1}\\
&\leq -\cfrac{k_2e_0s}{K_M+s}\left( \cfrac{1}{1+\varepsilon_{RS}^{-1}}\right)\\
&=-\cfrac{k_2e_0s}{K_M+s}\cdot \sum_{j=0}^{\infty}(-1)^j\varepsilon_{RS}^{-j}, \quad 1<\varepsilon_{RS}.
\end{aligned}
\end{equation}
The bound \eqref{sbound2} implies the sQSSA is a good approximation when $\varepsilon_{RS}^{-1}\ll 1$ and 
$e_0 \lesssim s$, improving as $e_0/s \to 0$.

While the sQSSA approximates the flow on $S_0^{\varepsilon}$ well when 
$\varepsilon_{RS}^{-1} \ll 1$ and $e_0\lesssim s$, it cannot be equipped with $s_0$ 
as the initial substrate concentration in the rQSSA scenario. If the initial condition 
is $(s,c)(0)=(s_0,0)$, with $e_0 < s_0$, Fenichel theory dictates that the 
reduction~\eqref{sred} should be equipped with $(s,c)=(s_0-e_0,e_0).$  This extends 
to the sQSSA if used to approximate the flow on $S_0^{\varepsilon}$ to the right of 
the bifurcation point when $\varepsilon_{RS}^{-1}\ll 1$. 

The existence of multiple accurate QSS reductions under the same conditions necessitates 
evaluating which QSSA is predominant. To truly verify the standard-QSS reduction's 
applicability, we must explore other QSS reductions near the sQSS curve in the phase-plane. 
As noted in Section~\ref{sec:preliminaries}, in scenarios like large $k_{-1}$, the 
sQSSA~\eqref{sQSSA2} and the slow product QSSA~\eqref{slowp} are indistinguishable. 
To find the most accurate reduced system and its validity conditions, we next 
investigate the slow manifold's location relative to the (algebraic) variety that 
generates the slow product QSS reduction \eqref{slowp}.

\section{Trapping the Slow Manifold: The Slow Product QSSA}\label{sec:trapping_spQSS}
This section defines a new upper fence and anti-funnel for the slow manifold using the 
slow product QSS curve under specific parametric restrictions. We compare this to previous 
results to assess the approximation accuracy of the sQSSA \eqref{sQSSA2} and the slow product QSSA \eqref{slowp}, 
revisiting the Reich-Sel'kov condition and investigating whether it ensures the 
predominance of the sQSSA.

\begin{figure}[tb!]
    \centering
    \includegraphics[scale=0.6]{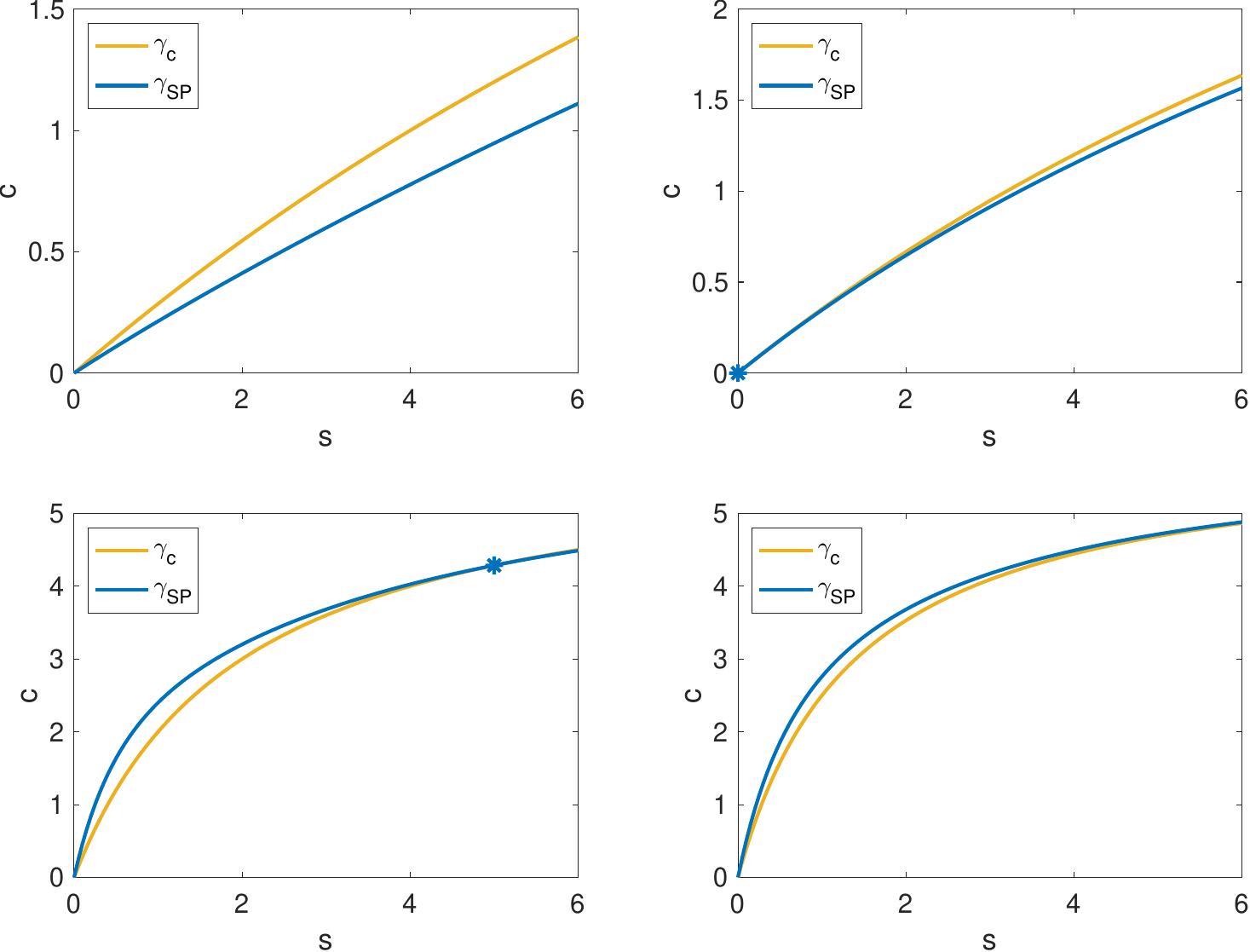}
    \caption{\textbf{The slow product QSS curve $\gamma_{SP}$ lies above the horizontal 
    nullcline $\gamma_c$ for $s < s^*$ \eqref{eq:sp_intersection} in the phase plane.} 
    \textbf{Top Left ($k_2 > e_0k_1$)}: $k_1 = 0.1$, $k_2 = 1$, $k_{-1} = 1$, $e_0 = 6$, 
    $s^* = -4$. \textbf{Top Right ($k_2 = e_0k_1$)}: $k_1 = 0.1$, $k_2 = 0.6$, $k_{-1} = 1$, 
    $e_0 = 6$, $s^* = 0$. \textbf{Bottom Left ($k_2 < e_0k_1$)}: $k_1 = 1$, $k_2 = 1$, 
    $k_{-1} = 1$, $e_0 = 6$, $s^* = 5$. \textbf{Bottom Right ($k_2 < e_0k_1$)}: $k_1 = 1$, 
    $k_2 = 0.4$, $k_{-1} = 1$, $e_0 = 6$, $s^* = 14$.}
    \label{fig:slow_product_movement_panel}
\end{figure} 

The slow product QSS reduction \eqref{slowp}, derived from Fenichel theory for small $k_2$,
corresponds to the QSS curve
\begin{equation}\label{eq:qss_sp}
    c = \gamma_{SP}(s) := \frac{e_0s}{K_S+s} - \frac{k_2e_0s}{k_1(K_Se_0 + (K_S+s)^2)}.
\end{equation}
This closely resembles the curve $\gamma(s)$. In Section~\ref{sec:trapping_sqss}, we 
established that the slow manifold lies between $\gamma(s)$ and $\gamma_c(s)$.
Now, we explore how the slow product QSS variety fits into these findings and the
insights its location provides concerning the sQSSA's validity and predominance.

In the phase plane, $\gamma_{SP}(s)$ lies above the horizontal nullcline $\gamma_c(s)$ 
for $s < s^*$ where
\begin{equation}\label{eq:sp_intersection}
    s^* = \frac{k_{-1}e_0}{k_2} - \frac{k_{-1}}{k_1} = \frac{k_{-1}}{k_2}(e_0 - K)
\end{equation}
is their intersection point for substrate concentration. We are primarily interested 
in cases where $s^*$ is positive, placing the intersection within the first quadrant. 
When $s^*$ is negative, or equivalently, when
\begin{equation}
    e_0 < K,
\end{equation}
the slow manifold crosses the sQSS variety, and the sQSSA is the best known reduction.

However, as $e_0$ increases, $s^*$ shifts to the right. {\sc Figure}~\ref{fig:slow_product_movement_panel} 
shows how the curves' positions change with parameter values. When $e_0 > K$, the 
intersection point lies in the first quadrant and $\gamma_{SP}(s) \geq \gamma_c(s)$ 
for $0 \leq s \leq s^*$. Also, $\gamma_{SP}(s) \leq \gamma(s)$ for all $s \geq 0$. Thus, 
the slow product QSS curve lies within $\Gamma_0$ for a significant portion of the 
phase plane. This raises the questions: Do solutions cross $\gamma_{SP}(s)$ after 
crossing $\gamma_c(s)$ under any parametric conditions? If so, can we locate the slow 
manifold more precisely? 

Numerical evidence suggests this is possible, with solutions lying closer to the slow 
product QSS curve in the steady-state regime (see Figure \ref{fig:slow_product_solutions}). 
Interestingly, the parametric conditions for this also ensure the slow-product QSS 
reduction's dominance over the sQSS reduction, aligning with our overall goal.


\begin{figure}[tb!]
    \centering
    \includegraphics[scale=0.45]{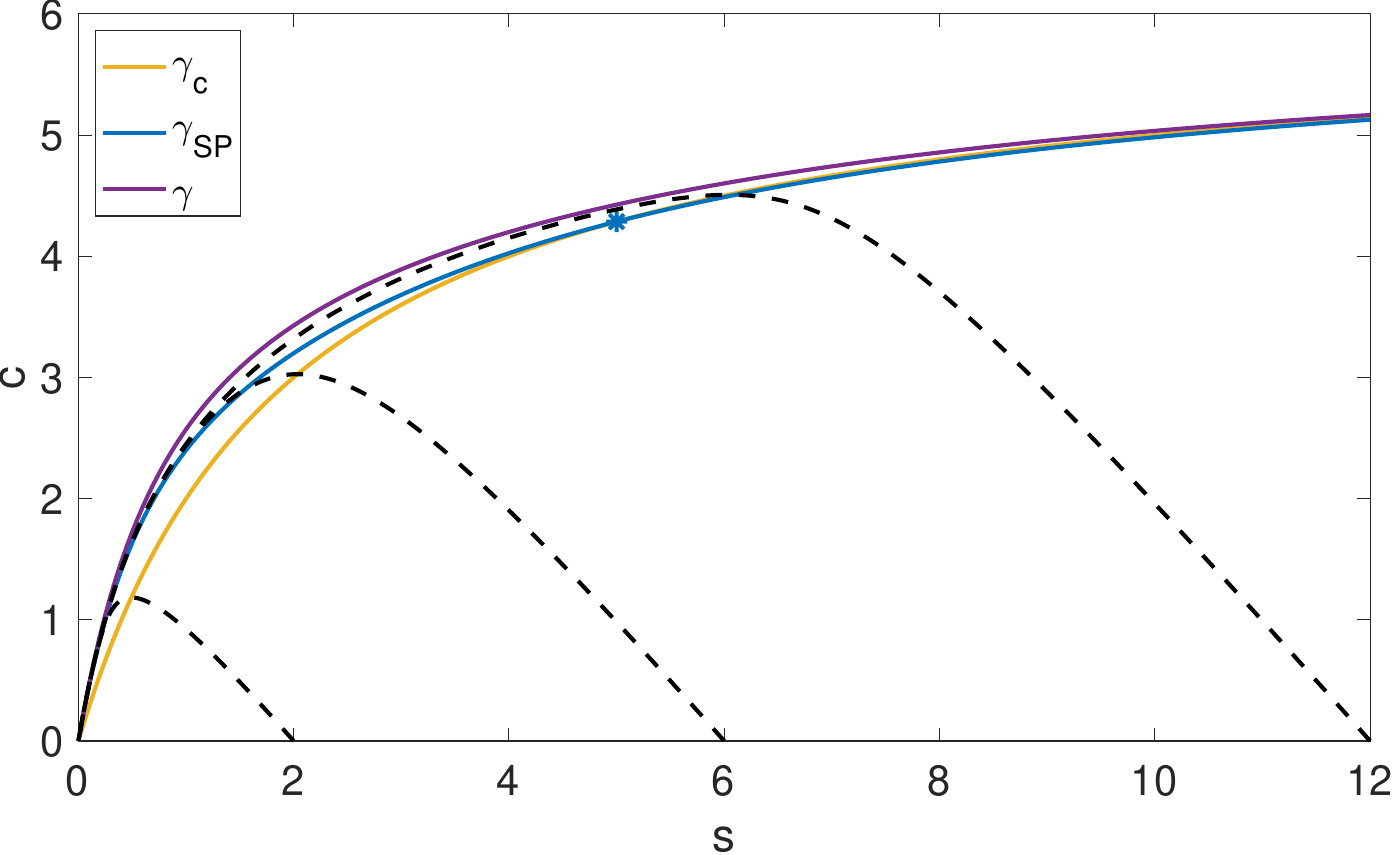}
    \caption{\textbf{The region between the slow product QSS curve $\gamma_{SP}(s)$ 
    and the curve $\gamma(s)$ is a trapping region for solutions in the phase plane 
    when $e_0 < 8K_S$}. The intersection point of $\gamma_{SP}$ and $\gamma_c$, $s^*$, 
    is denoted by the blue star. The numerical solutions to the mass action 
    equations~\eqref{MM} for  several initial conditions are denoted by black dashed 
    lines. $s_0 \in \{2,6,12\}$, $c_0 = 0$, $e_0 = 6$, $k_1 = 1$, $k_{-1} = 1$, 
    $k_2 = 1.$}
    \label{fig:slow_product_solutions}
\end{figure} 

Our main results establish the conditions for positive invariance of the set bordered by the graphs of $\gamma_{SP}(s)$ 
and $\gamma(s)$:
\begin{equation}\label{gamma_SP}
    \Gamma_{SP} := \left\{(s,c) \in \mathbb{R}^2_{\geq 0}\; : \; \;\gamma_{SP}(s) \leq c \leq 
    \gamma(s)\right\},
\end{equation}
and  show that all solutions eventually cross $\gamma_{SP}(s)$ under those conditions.

\begin{theorem}\label{thm:trapping_slowproduct}
For a solution $(s,c)(t)$ to \eqref{mmMA} with initial conditions $(s,c)(0) = (s_0,0)$. 
If $e_0 < 8K_S$, then there exists a $t_{\rm cross\_sp} > 0$ at which the trajectory crosses $\gamma_{SP}(s)$, and  
\begin{equation}
(s,c)(t) \in \Gamma_{SP}, \quad \forall \; t \geq t_{\rm cross\_sp}.
\end{equation}
\end{theorem}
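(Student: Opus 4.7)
The strategy mirrors the proof of Theorem~\ref{thm:trapping_gamma_0}: verify that $\Gamma_{SP}$ is a narrowing anti-funnel (with $\gamma$ as the lower fence and $\gamma_{SP}$ as the upper fence in the sense of Definition~\ref{def:fences}), apply Theorem~\ref{thm:antifunnel}, and then track the trajectory from $(s_0,0)$ into $\Gamma_{SP}$. The lower-fence inequality on $\gamma$ is already supplied by Theorem~\ref{thm:trapping_gamma_0}; the ordering $\gamma_{SP}\le\gamma$, the narrowing $\gamma(s)-\gamma_{SP}(s)\to 0$ as $s\to\infty$ (both curves share the horizontal asymptote $c=e_0$), and the monotonicity $\partial f/\partial c\ge 0$ on $\Gamma_{SP}$ transfer directly from the explicit formulas and from the verifications already done for $\Gamma_0$. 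The substantive step is showing that $\gamma_{SP}$ is a strong upper fence, and this is where the hypothesis $e_0<8K_S$ enters.

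To verify the fence condition I would compute the sign of $\dot{c}(s,\gamma_{SP}(s))-\gamma_{SP}'(s)\dot{s}(s,\gamma_{SP}(s))$ directly. Since $c=e_0 s/(K_S+s)$ is precisely the $s$-nullcline and $\gamma_{SP}$ differs from it by a $k_2$-order perturbation, the mass action equations together with the conservation law $\dot{c}=-\dot{s}-k_2 c$ yield
\begin{equation*}
    \dot{s}\big|_{c=\gamma_{SP}(s)}=-\frac{k_2 e_0 s(K_S+s)}{D}, \qquad D:=K_S e_0+(K_S+s)^{2}.
\end{equation*}
Substituting this along with the explicit derivative $\gamma_{SP}'(s)$ and simplifying,
\begin{equation*}
    \dot{c}-\gamma_{SP}'(s)\dot{s}=\frac{k_2^{2}e_0 s}{k_1 D^{3}}\bigl[(K_S+s)^{4}+e_0(K_S+s)^{3}-K_S e_0^{2} s\bigr].
\end{equation*}
Only the bracketed polynomial can carry a negative sign, so its strict positivity on $[0,\infty)$ is the crux.

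The main obstacle is identifying the sharp threshold for this polynomial inequality. Rescaling by $v=(K_S+s)/K_S\ge 1$ and $\alpha=e_0/K_S$ reduces positivity to $g(v):=v^{4}+\alpha v^{3}-\alpha^{2}v+\alpha^{2}>0$ on $[1,\infty)$. The marginal $\alpha$ satisfies $g(v)=g'(v)=0$ simultaneously; eliminating $\alpha$ via $\alpha^{2}=v^{2}(4v+3\alpha)$ produces the cubic $v^{3}+3v^{2}-12v+8=(v-1)(v^{2}+4v-8)=0$, whose unique relevant root is $v^{*}=2(\sqrt{3}-1)$. Back-substitution gives $\alpha^{*}=8$ exactly, so $g>0$ on $[1,\infty)$ iff $\alpha<8$, equivalently $e_0<8K_S$. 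Combined with the already-verified $\gamma$ inequality this yields positive invariance of $\Gamma_{SP}$.

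Finally, for the existence of $t_{\rm cross\_sp}$: the uniqueness clause of Theorem~\ref{thm:antifunnel}, with $\partial f/\partial c \ge 0$ on $\Gamma_{SP}$, places the slow manifold $S_0^{\varepsilon}$ strictly inside $\Gamma_{SP}$, hence strictly above $\gamma_{SP}$ for $s>0$. The trajectory from $(s_0,0)$ starts with $c=0<\gamma_{SP}(s_0)$ but is attracted exponentially to $S_0^{\varepsilon}$ by Fenichel theory; continuity of $c(t)-\gamma_{SP}(s(t))$ combined with the fact that $c(t)$ eventually tracks $S_0^{\varepsilon}(s(t))$ above $\gamma_{SP}(s(t))$ produces a first finite crossing time $t_{\rm cross\_sp}>0$, after which positive invariance confines the trajectory to $\Gamma_{SP}$ for all later times.
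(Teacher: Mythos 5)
Your verification that $\gamma_{SP}$ is a strong upper fence is correct and essentially identical to the paper's: you arrive at the same reduced expression
$\dot{c}-\gamma_{SP}'(s)\dot{s}=\tfrac{k_2^{2}e_0 s}{k_1 D^{3}}\bigl[(K_S+s)^{4}+e_0(K_S+s)^{3}-K_S e_0^{2}s\bigr]$,
and your identification of the threshold via the double-root condition on $g(v)=v^4+\alpha v^3-\alpha^2 v+\alpha^2$ (yielding $v^{*}=2(\sqrt3-1)$, $\alpha^{*}=8$) is a valid variant of the paper's route, which instead substitutes $x=K_S/(K_S+s)$ and factors the quartic $\eta^2x^4-\eta^2x^3+\eta x+1$ explicitly, showing it has no real roots for $\eta<8$. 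The ordering, narrowing, and monotonicity claims likewise match the paper.

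The gap is in your final step, the existence of $t_{\rm cross\_sp}$. You argue that the trajectory is ``attracted exponentially to $S_0^{\varepsilon}$ by Fenichel theory,'' and that since $S_0^{\varepsilon}$ lies above $\gamma_{SP}$, the trajectory must cross $\gamma_{SP}$. This fails for two reasons. First, the theorem is stated for arbitrary parameters satisfying $e_0<8K_S$, not in a singular-perturbation regime, so Fenichel's exponential attraction estimates are not available; the whole point of the anti-funnel machinery here is to avoid such asymptotic input. Second, and more fundamentally, closeness to $S_0^{\varepsilon}$ does not imply crossing $\gamma_{SP}$: as $t\to\infty$ the trajectory, $S_0^{\varepsilon}$, and $\gamma_{SP}$ all collapse to the origin, so $|c(t)-h^{\varepsilon}(s(t))|\to 0$ can hold with the trajectory remaining sandwiched between $\gamma_c$ and $\gamma_{SP}$ forever. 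The paper closes this by a tangency argument at the node: a trajectory that never crossed $\gamma_{SP}$ would have to approach the origin at a slope at most $\gamma_{SP}'(0)$, whereas trajectories approach along the slow eigenvector with slope $m$; the strict inequality $\gamma_{SP}'(0)<m$ (immediate from $\gamma_{SP}'(0)<\gamma_c'(0)<m$ when $e_0<K$, and proved by a squaring argument in the appendix when $k_2<k_1e_0$) then yields the contradiction. You need this slope comparison, or an equivalent quantitative statement near the origin, to complete the proof.
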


\begin{proof}
To establish positive invariance, we show that the vector field at the boundary curve 
$\gamma_{SP}$ points towards $\Gamma_{SP}$ when $e_0 < 8K_S$:
\begin{equation}
\dot{c}(s,\gamma_{SP}(s)) - \gamma_{SP}'(s)\dot{s}(s,\gamma_{SP}(s)) > 0 \label{eq:ineq1_SP}
\end{equation}
See~\ref{sec:appendix_trapping2} for details. Combining this with \eqref{eq:ineq1} 
ensures solutions entering $\Gamma_{SP}$ remain within.

We also show that solutions beginning outside $\Gamma_{SP}$ eventually enter it. In contradiction, 
assume that a solution starting at $(s_0,0)$ converges to the stationary point $(0,0)$ without
entering $\Gamma_{SP}$. 
Then, the slope of the tangent to $c = \gamma_{SP}(s)$ at $s = 0$ should be greater than
$m$ \eqref{eq:alpha}. However, we show that
\begin{equation}\label{eq:ineq_slowp_slope}
\gamma_{SP}'(0) < m
\end{equation}
always holds, where:
\begin{equation}
\gamma_{SP}'(0) = \frac{k_1e_0}{k_{-1}} - \frac{k_2e_0}{k_{-1}(K_S+e_0)}.
\end{equation}
When $e_0 < K$, $\gamma'_{SP}(0) < \gamma'_c(0)$ holds, and Theorem~\ref{thm:trapping_gamma_0} 
implies $\gamma'_c(0) < m$, demonstrating $\gamma'_{SP}(0) < m$. For $e_0 > K$, see 
\ref{sec:appendix_crossing_slowproduct} for proof of \eqref{eq:ineq_slowp_slope}. Thus, our 
assumption is false, and a $t_{\rm cross\_sp} > 0$ exists when the trajectory crosses 
$\gamma_{SP}(s)$. The proof is complete by combining the two arguments. 
\end{proof}

The primary benefit of establishing the positive invariance of $\Gamma_{SP}$ is that 
it allows for a more precise localization of the slow manifold.

\begin{theorem}
If $e_0 < 8K_S$, then for the differential equation~\eqref{1D}, $\Gamma_{SP}$ is a 
positively invariant, narrowing anti-funnel within which there exists a unique slow 
manifold, $S_0^{\varepsilon}$. 
\end{theorem}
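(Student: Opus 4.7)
The plan is to apply the Anti-funnel Theorem (Theorem~\ref{thm:antifunnel}) directly to the region $\Gamma_{SP}$, treating $s$ as the independent variable in \eqref{1D}. Most of the required ingredients are already in hand from earlier results. The strong upper fence status of $\gamma$ follows from inequality \eqref{eq:ineq1}, which was established in the proof of Theorem~\ref{thm:trapping_gamma_0}. The strong lower fence status of $\gamma_{SP}$ follows from inequality \eqref{eq:ineq1_SP}, and this is precisely where the hypothesis $e_0 < 8K_S$ enters, via the proof of Theorem~\ref{thm:trapping_slowproduct} given in~\ref{sec:appendix_trapping2}.

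What remains is to verify the anti-funnel geometry, the narrowing condition, and the monotonicity hypothesis on $\partial f/\partial c$. For the anti-funnel property $\gamma_{SP}(s) < \gamma(s)$ for $s>0$, I would subtract the two expressions; after canceling the shared term $e_0 s/(K_S+s)$, the sign reduces to the polynomial inequality
\begin{equation*}
(K_S+s)\bigl(K_M e_0 + (K_M+s)^2\bigr) > (K_M+s)\bigl(K_S e_0 + (K_S+s)^2\bigr),
\end{equation*}
which, after expansion and collecting, is visibly positive for $s>0$ because $K = K_M - K_S > 0$. For narrowing, I would simply observe that both $\gamma_{SP}(s)$ and $\gamma(s)$ tend to the common asymptote $c=e_0$ as $s\to\infty$: in each case the leading term $e_0 s/(K_S+s)\to e_0$, while the correction terms decay at least like $1/s$. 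For the monotonicity hypothesis, a direct differentiation gives
\begin{equation*}
\frac{\partial f}{\partial c}(c,s) \; = \; \frac{K\, e_0\, s}{\bigl[e_0 s - (K_S+s)c\bigr]^{2}} \; \geq \; 0,
\end{equation*}
which is nonnegative throughout $\Gamma_{SP}$; the denominator is strictly positive since $\gamma < \gamma_s$ and hence $\Gamma_{SP}$ lies strictly below the $s$-nullcline.

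Invoking Theorem~\ref{thm:antifunnel} then yields a unique solution of \eqref{1D} trapped between $\gamma_{SP}$ and $\gamma$. The final, and in my view most delicate, step is to identify this distinguished solution with the Fenichel slow manifold $S_0^{\varepsilon}$ rather than just \emph{some} invariant curve. I would argue this by coupling two earlier facts: Theorem~\ref{thm:trapping_gamma_0} already localizes $S_0^{\varepsilon}$ inside $\Gamma_0$, and $\Gamma_{SP}$ shares the upper fence $\gamma$ with $\Gamma_0$; since $S_0^{\varepsilon}$ is invariant and every trajectory entering $\Gamma_{SP}$ remains there by Theorem~\ref{thm:trapping_slowproduct}, the unique anti-funnel solution and $S_0^{\varepsilon}$ must coincide. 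The hardest part is not any single calculation but this identification step, which depends on carefully weaving together the invariance from Theorem~\ref{thm:trapping_slowproduct} with the uniqueness supplied by the Anti-funnel Theorem.
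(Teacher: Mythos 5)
Your proposal is correct and follows essentially the same route as the paper: invoke the Anti-funnel Theorem using \eqref{eq:ineq1} and \eqref{eq:ineq1_SP} for the strong fences, check $\gamma_{SP}<\gamma$, the narrowing limit, and $\partial f/\partial c\geq 0$. The only quibble is terminological: under the paper's convention (Definition~\ref{def:fences}, where the lower fence lies geometrically \emph{above} the upper fence in an anti-funnel), $\gamma_{SP}$ is the \emph{upper} fence and $\gamma$ the \emph{lower} fence, the reverse of your labels.
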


\begin{proof}
Inequality~\eqref{eq:ineq1_SP} ensures $c = \gamma_{SP}(s)$ is a strong upper fence 
for \eqref{1D} when $e_0 < 8K_S$. It is verifiable that $\gamma(s) > \gamma_{SP}(s)$ 
for $s > 0$ and: 
\begin{equation}
\lim\limits_{s \rightarrow \infty} |\gamma(s) - \gamma_{SP}(s)| = 0
\end{equation}
implying $\Gamma_{SP}$~\eqref{gamma} is a narrowing anti-funnel for $e_0 < 8K_S$. 
Further, 
\begin{equation}
    \cfrac{\partial f}{\partial c} (c,s) \geq 0
\end{equation}
in $\Gamma_{SP}$, and the claim follows from 
Theorem~\ref{thm:antifunnel}.
\end{proof}

\begin{figure}
    \centering
     \begin{subfigure}
         \centering
         \includegraphics[scale = 0.38]{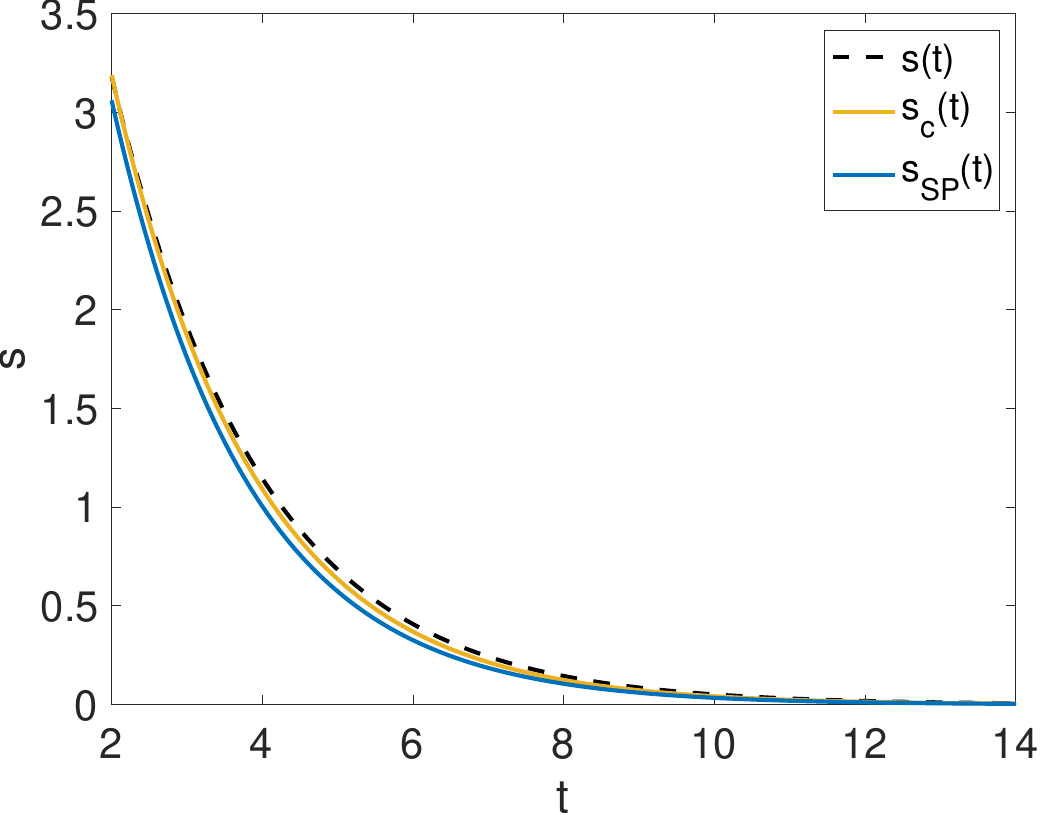}
     \end{subfigure}
     \qquad
     \begin{subfigure}
         \centering
         \includegraphics[scale = 0.38]{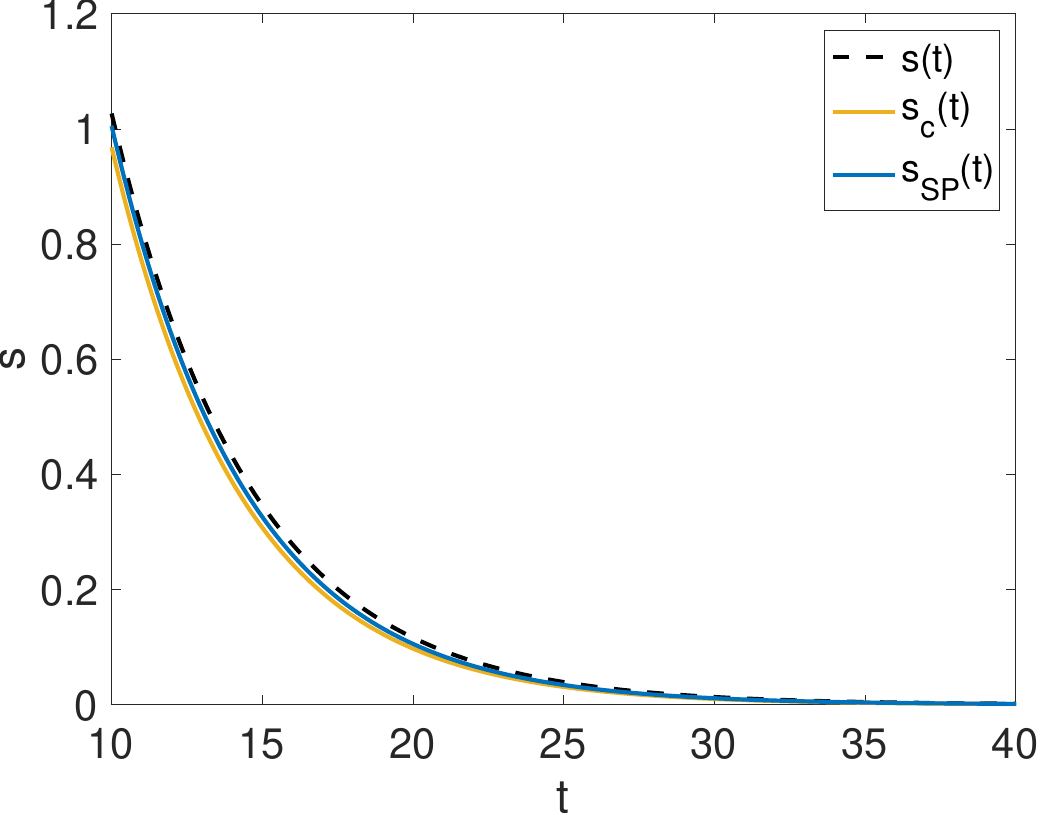}
     \end{subfigure}  
    \caption{\textbf{The condition $e_0 < K$ ensures that the sQSSA provides a better 
    approximation than the slow product QSSA under the Reich-Sel'kov condition.} 
    $e_0 \ll K_M$ is satisfied in both panels. \textbf{Left ($e_0 < K$)}: $s_0 = 9$, 
    $c_0 = 0$, $e_0 = 6$, $k_1 = 1$, $k_{-1} = 100$, $k_2 = 10.$ \textbf{Right ($e_0 > K$)}: 
    $s_0 = 9$, $c_0 = 0$, $e_0 = 6$, $k_1 = 1$, $k_{-1} = 100$, $k_2 = 4$. Substrate 
    depletion over time is shown in both panels. The numerical solution to the mass 
    action equations~\eqref{MM} is denoted by black dashed lines. $s_c(t)$ is the 
    numerical solution to \eqref{sQSSA2} and $s_{SP}(t)$ is the numerical solution 
    to \eqref{slowp}.}
    \label{fig:standard_fails}
\end{figure}

The slow manifold is contained within $\Gamma_{SP}$ when $e_0 < 8K_S$. While this condition 
might seem restrictive, it connects directly to the sQSSA's validity analysis. We show that 
the slow manifold lies closer to the slow product QSS curve than the sQSS curve when 
$e_0 < 8K_S$. The remaining task is to analyze the implications of this location.

\subsection{Revisiting the Reich-Sel'kov Condition}\label{sec:revisiting_RS}
The Reich-Sel'kov qualifier ensures the sQSS reduction's accuracy, as validated in 
Section~\ref{sec:trapping} using the novel anti-funnel \eqref{thm:trapping_gamma_0}. 
Our analysis shows that the sQSSA is the best known approximation to the slow manifold 
under most parametric scenarios. For instance, recall the intersection 
point~\eqref{eq:sp_intersection}. When $e_0 < K$, the intersection is inconsequential. 
Consequently, when the Reich-Sel'kov condition $e_0 \ll K_M$ is combined with $e_0 < K$,
the sQSSA is the predominant approximation.

What happens when $e_0 \ll K_M$, $e_0 \ll K_S$ (ensuring $e_0 < 8K_S$), but $e_0 > K$?
Theorem~\ref{thm:trapping_slowproduct} implies that trajectories lie closer to 
$\gamma_{SP}$ than to $\gamma_c$ for a significant portion of the slow dynamics (see 
{\sc Figure}~\ref{fig:slow_product_solutions}), and therefore 
\begin{equation}
{-\cfrac{k_2e_0s}{K_M+s} \leq -\cfrac{k_2e_0s(K_S+s)}{K_Se_0+(K_S+s)^2} \leq \dot{s} \leq -\cfrac{k_2e_0s(K_M+s)}{K_Me_0+(K_M+s)^2}}
\end{equation}
after the trajectories cross $\gamma_{SP}$ and $s \leq s^*$. Moreover, we know that the 
slope $m$ of the linear approximation is greater than the slope of the tangent to 
$\gamma_{SP}$ at the origin~\eqref{eq:ineq_slowp_slope} and $\gamma_{SP}'(0) > \gamma_c'(0)$ 
whenever $e_0 > K$. Thus, the approximation errors near the stationary point follow:
\begin{equation}
m - \gamma_{SP}'(0) < m - \gamma_c'(0)
\end{equation}
when $e_0 > K$. Consequently, complementing the Reich-Sel'kov 
condition with the stronger condition $e_0 < K$ is necessary to guarantee that the sQSS is 
indeed the predominant quasi-steady-state approximation of the MM reaction 
mechanism, and this may be relevant in the context of the inverse problem where parameters are 
estimated at low substrate concentration~\cite{Stroberg2016}: If $e_0 > K$, using the slow 
product QSS variety as an approximation for 
small $s$ is beneficial. {\sc Figure}~\ref{fig:standard_fails} illustrates the numerical substrate 
depletion curves. The restrictive condition $e_0 < K$ is necessary to ensure the sQSS most 
accurately approximates the MM dynamics for small substrate concentrations.

\section{Discussion}
This work addresses a critical gap in the understanding of the quasi-steady-state approximation 
for the MM reaction mechanism. While the sQSSA is a widely used tool in enzyme kinetics, the 
precise conditions for its validity have remained a topic of ongoing investigation. We conducted 
a thorough phase-plane analysis of the MM mass-action kinetics using the theory of fences and 
anti-funnels, a powerful technique for analyzing the long-time behavior of dynamical systems.

Our analysis led to the identification of new positively invariant sets that contain the slow 
manifold, $S_0^{\varepsilon}$, in the substrate-complex phase plane. These sets provide valuable 
insights into the dynamics of the system and allow for a more precise characterization of the 
sQSSA's accuracy. As a result, we obtained improved bounds on the estimation error of various 
QSS approximations, including the standard, reverse, and slow product formation approximations, 
in the slow regime.

Significantly, we have demonstrated that the commonly accepted qualifier for the validity of 
the sQSSA, the Reich-Sel'kov condition ($e_0 \ll K_M$), does not guarantee that the sQSSA is 
the predominant reduction. Predominance necessitates a more restrictive condition
\begin{equation}
e_0 \ll K,
\end{equation}
where $K$ is the Van Slyke-Cullen constant. This finding challenges the traditional understanding 
of the sQSSA's validity and highlights the importance of considering both accuracy and predominance 
when evaluating QSS approximations.

It is important to note that our analysis primarily focused on the approximation error in the 
slow regime, specifically the error between the Fenichel reduction and the actual flow on the 
slow manifold, $S_0^{\varepsilon}$. However, two primary sources of error contribute to the 
overall accuracy of approximate solutions to singularly perturbed ODEs. The first is the error 
in approximating the flow on the slow manifold itself, as addressed in our analysis. The second 
is the error in approximating the trajectory's approach to the slow manifold, including timescale 
estimates such as $t_{\rm cross}$, which demarcates the intersection of trajectories with 
pertinent QSS varieties. This latter source of error is generally more challenging to analyze. 
Although some progress has been made in this area (see \citet{eilertsen2024rigorous}), further
exploration and improvement is required.

Our findings have broader implications for the application of QSS approximations in various 
fields. By providing a more refined understanding of the sQSSA's validity, our work can guide 
researchers in selecting the most appropriate and accurate reduction for their specific needs. 
This is particularly crucial in areas such as quantitative biology and pharmacology, where accurate 
model reduction is essential for understanding complex biological processes and designing 
effective therapeutic interventions.

Future research could extend our analysis by considering more complex reaction mechanisms 
or incorporating additional factors that might influence the sQSSA's validity. Further 
investigation of the error associated with the trajectory's approach to the slow manifold 
is also warranted. By addressing these open questions, we can continue to refine our 
understanding of QSS approximations and enhance their utility in diverse scientific disciplines.

\section*{Acknowledgments}
KS is supported by a research fellowship from the College of Science at the University of 
Notre Dame.

\section*{Data availibility}
We do not analyse or generate any datasets, because our work proceeds within a theoretical and mathematical approach.

\appendix
\section*{Appendices}
These appendices provide supplementary information to enhance the understanding of the main text. 
The appendices include a synopsis of the theory of fences, funnels, and anti-funnels, as well as detailed 
calculations and proofs for key results presented in the paper.

\section{Synopsis of the theory of fences, funnels and antifunnels}\label{sec:appendix_synopsis}
This section highlights the notation of fences and anti-funnels within the context of the 
Michaelis-Menten system.

In general, an anti-funnel is the region above an upper fence and below a lower fence in a 
two-dimensional phase space. Typically, anti-funnels (or funnels) are visualized with the 
independent variable increasing from left to right. The vector field points outwards at the 
fences of an anti-funnel, while it points inwards at the fences of a funnel. Consequently, 
solutions generally leave anti-funnels. The anti-funnel theory implies that all but one 
solution eventually leave a narrowing anti-funnel \cite[see][Chapter 1]{hubbard1997differential}). 
This property makes fences and anti-funnels useful for identifying a unique, exceptional 
solution that remains within the anti-funnel.

In the context of our work on the Michaelis-Menten system, the substrate concentration, 
$s$, decreases over time, with $s(0) = s_0$ and $s(t) \rightarrow 0$ as $t \rightarrow \infty$.
In the $(s,c)$ phase-plane, solution trajectories start at the substrate axis and move 
from right to left with respect to the $s$-axis. Thus, the vector field always has a negative 
component in the $s$-direction.

Despite this difference in directionality, the concept of anti-funnels remains applicable. 
The vector field can still point inwards at the anti-funnel fences in the $(s,c)$ 
coordinate system as we move from right to left, while satisfying the conditions in 
Definition~\ref{def:fences}. Consequently, solution trajectories can enter the anti-funnels 
defined in the paper.

It is important to remember the nomenclature of upper and lower fences in this context. 
The slow manifold lies above the upper fence and below the lower fence in the phase plane.

\section{\texorpdfstring{$\Gamma_0$}{Gamma} is a positively invariant set for solutions 
in the phase plane}\label{sec:appendix_trapping1}
This section provides the detailed calculations to prove inequality \eqref{eq:ineq1}, 
which establishes the positive invariance of the set $\Gamma_0$. 

We begin by evaluating the dynamics at the curve $c = \gamma(s)$:
\begin{subequations}
    \begin{align}
        \dot{s}(s,\gamma(s)) &= -k_1e_0s + k_1(K_S+s)\gamma(s), \\ 
                &= -k_1e_0s + k_1(K_S+s)\left(\frac{e_0s}{K_S+s} - \frac{k_2e_0s(K_M+s)}{k_1(K_S+s)(K_Me_0 + (K_M+s)^2)}\right), \\
                &= -\frac{k_2e_0s(K_M+s)}{K_Me_0 + (K_M+s)^2},
    \end{align}
\end{subequations}
and
\begin{subequations}
    \begin{align}
        \dot{c}(s,\gamma(s)) &= k_1e_0s - k_1(K_M+s)\gamma(s), \\
                &= k_1e_0s - k_1(K_M+s)\left(\frac{e_0s}{K_S+s} - \frac{k_2e_0s(K_M+s)}{k_1(K_S+s)(K_Me_0 + (K_M+s)^2)}\right), \\
                &= -\frac{k_2e_0s}{K_S+s} + \frac{k_2e_0s(K_M+s)^2}{(K_S+s)(K_Me_0 + (K_M+s)^2)}.
    \end{align}
\end{subequations}

The derivative of $\gamma(s)$ is: 
\begin{align}
    \gamma'(s) &= \frac{K_S e_0}{(K_S+s)^2} - \frac{k_2K_Se_0(K_M+s)}{k_1(K_S+s)^2(K_Me_0 + (K_M+s)^2)}  \\  & \quad - \frac{k_2e_0s}{k_1(K_S+s)(K_Me_0 + (K_M+s)^2)} 
   + \frac{2k_2e_0s(K_M+s)^2}{k_1(K_S+s)(K_Me_0 + (K_M+s)^2)^2}.
\end{align}
The required expression at $c = \gamma(s)$ is:
\begin{subequations}
    \begin{align}
        \dot{c} - \gamma'(s)\dot{s} &= -\frac{k_2e_0s}{K_S+s} + \frac{k_2e_0s(K_M+s)^2}{(K_S+s)(K_Me_0 + (K_M+s)^2)} + \frac{k_2e_0s(K_M+s)}{K_Me_0 + (K_M+s)^2} \gamma'(s), \\
        &= -\frac{k_2e_0s}{(K_S+s)(K_Me_0 + (K_M+s)^2)}\left(K_Me_0 - (K_M+s)(K_S+s)\gamma'(s)\right).
    \end{align}
\end{subequations}
To prove inequality \eqref{eq:ineq1}, we need to show that 
$\delta(s) = K_Me_0 - (K_M+s)(K_S+s)\gamma'(s) > 0$. This also proves that $\gamma(s)$ 
is a lower fence. Substituting $\gamma'(s)$ and simplifying, we obtain:
    \begin{align*}
        \delta(s) &= K_Me_0 - \frac{K_Se_0(K_M+s)}{(K_S+s)} + \frac{k_2K_Se_0(K_M+s)^2}{k_1(K_S+s)(K_Me_0 + (K_M+s)^2)} \\
        & \quad + \frac{k_2e_0s(K_M+s)}{k_1(K_Me_0 + (K_M+s)^2)} - \frac{2k_2e_0s(K_M+s)^3}{k_1(K_Me_0+(K_M+s)^2)^2}, \\
        & = \frac{k_2e_0}{k_1(K_S+s)(K_Me_0 + (K_M+s)^2)^2}\Bigl((K_Me_0 + (K_M+s)^2 s)^2  \\
        & \qquad + K_S(K_M+s)^2(K_Me_0 + (K_M+s)^2) \\ & \qquad \qquad + (K_M+s)(K_S+s)(K_Me_0 + (K_M+s)^2)s  - 2(K_M+s)^3(K_S+s)s\Bigl). 
    \end{align*}
Expanding the first term in the expression, 
\begin{align*}
    \delta(s) &=  \frac{k_2e_0}{k_1(K_S+s)(K_Me_0 + (K_M+s)^2)^2}\Bigl((K_Me_0)^2s + 2(K_Me_0)(K_M+s)^2s + (K_M+s)^4s  \\
        & \qquad + K_S(K_M+s)^2(K_Me_0 + (K_M+s)^2) \\ & \qquad \qquad + (K_M+s)(K_S+s)(K_Me_0 + (K_M+s)^2)s  - 2(K_M+s)^3(K_S+s)s\Bigl), \\
        &= \frac{k_2e_0}{k_1(K_S+s)(K_Me_0 + (K_M+s)^2)^2}\Bigl((K_Me_0)^2s + 2(K_Me_0)(K_M+s)^2s \\
        & \qquad + (k_2/k_1)(K_M+s)^3 s + K_S(K_M+s)^2(K_Me_0 + (K_M+s)^2) \\
        & \qquad \qquad + K_Me_0(K_M+s)(K_S+s)s\Bigl),
\end{align*}
This expression is a sum of positive terms, proving inequality \eqref{eq:ineq1}. 

For inequality \eqref{eq:ineq2}, a straightforward calculation at $c = \gamma_c(s)$ gives: 
\begin{equation*}
    \dot{c} - \gamma_c'(s)\dot{s} = 0 + \left(\frac{K_Me_0}{(K_M+s)^2}\right)\left(\frac{k_2e_0s}{K_M+s}\right) 
\end{equation*}
which is always positive for $s > 0$. Thus, $\Gamma_0$ is a positively invariant region. 

\section{Positive invariance of \texorpdfstring{$\Gamma_{SP}$}{GammaSP} when 
\texorpdfstring{$e_0 < 8K_S$}{e0 < 8KS}}\label{sec:appendix_trapping2}
This section provides the detailed calculations to prove inequality \eqref{eq:ineq1_SP}, 
which establishes the positive invariance of the set $\gamma_{SP}$ when $e_0 < 8K_S$.

We first compute the differential terms at $\gamma_{SP}$:
    \begin{subequations}
    \begin{align}
        \dot{c}(s,\gamma_{SP}(s)) &= k_1e_0s - k_1(K_M+s)\gamma_{SP}(s), \\
                &= k_1e_0s - k_1e_0s\frac{(K_M+s)}{(K_S+s)} + 
                \frac{k_2e_0s(K_M+s)}{K_Se_0 + (K_S+s)^2}, \\
                &= -\frac{k_2e_0s}{(K_S+s)} +
                \frac{k_2e_0(K_M+s)s}{K_Se_0 + (K_S+s)^2}, 
    \end{align}
    \end{subequations}
and 
\begin{subequations}
    \begin{align}
        \dot{s}(s,\gamma_{SP}(s)) &= -k_1e_0s + k_1(K_S+s)\gamma_{SP}(s), \\
                &= -\frac{k_2e_0(K_S+s)s}{K_Se_0 + (K_S+s)^2}.
    \end{align}
\end{subequations}
The expression at $c = \gamma_{SP}(s)$ is:
\begin{subequations}
    \begin{align}
        \dot{c} - \gamma_{SP}'(s)\dot{s}&= -\frac{k_2e_0s}{(K_S+s)} +
                \frac{k_2e_0(K_M+s)s}{K_Se_0 + (K_S+s)^2} + \frac{k_2e_0(K_S+s)s}{K_Se_0 + (K_S+s)^2}\gamma_{SP}'(s), \\
                &= \frac{k_2e_0s}{(K_S+s)(K_Se_0 + (K_S+s)^2)} \bigl(-(K_Se_0 + (K_S+s)^2) + \\
                & \qquad (K_M+s)(K_S+s) + (K_S+s)^2\gamma_{SP}'(s) \bigr)
    \end{align}
    \end{subequations}
Proving inequality \eqref{eq:ineq1_SP} simplifies to showing:
\begin{equation}
    -(K_Se_0 + (K_S+s)^2) +
                 (K_M+s)(K_S+s) + (K_S+s)^2\gamma_{SP}'(s) > 0.
\end{equation}
This also proves that $\gamma_{SP}(s)$ is an upper fence. Substituting $\gamma_{SP}'(s)$:
\begin{subequations}
    \begin{align}
        \gamma_{SP}'(s) = \frac{K_Se_0}{(K_S+s)^2} - \frac{k_2e_0}{k_1(K_Se_0 + (K_S+s)^2)} + \frac{2k_2e_0(K_S+s)s}{k_1(K_Se_0 + (K_S+s)^2)^2},
    \end{align}
\end{subequations}
and simplifying, we obtain:
\begin{equation}
    \frac{k_2}{k_1}(K_S+s) - \frac{k_2e_0(K_S+s)^2}{k_1(K_Se_0 + (K_S+s)^2)} + \frac{2k_2e_0(K_S+s)^3s}{k_1(K_Se_0 + (K_S+s)^2)^2}
\end{equation}
which is positive for $e_0 < 8K_S$. To show this, we further simplify the inequality to:
\begin{subequations}
\begin{align}
    (K_S+s)^4 + e_0(K_S+s)^3 - K_Se_0^2s &> 0, \\
    1 + \frac{e_0}{(K_S+s)} - \frac{K_Se_0^2s}{(K_S+s)^4} &> 0.
    \end{align}
\end{subequations}
Substituting $e_0 = \eta K_S$ and introducing $x = K_S/(K_S+s)$, we get: 
\begin{equation}
   1 + \frac{e_0}{(K_S+s)} - \frac{K_Se_0^2s}{(K_S+s)^4} = 1 + \eta x - \eta^2 x^3(1-x) = \eta^2x^4 - \eta^2x^3 + \eta x + 1.
\end{equation}
We want to show that $\eta^2x^4 - \eta^2x^3 + \eta x + 1 > 0$ for $\eta < 8$. Dividing 
by $\eta^2$ and factoring the quartic function, we get:
\begin{equation}
    x^4 - x^3 + \frac{1}{\eta} x + \frac{1}{\eta^2} = \left(x^2 + \bar{\eta}_+x - \frac{1}{\eta}\right)\left(x^2 + \bar{\eta}_-x - \frac{1}{\eta}\right)
\end{equation}
where 
$$\bar{\eta}_{\pm} = -\frac{1}{2} \pm \frac{1}{2}\sqrt{\frac{\eta - 8}{\eta}}.$$ 
This quartic function has no real roots when $\eta < 8$ and remains positive. It has 2 
real repeated roots for $\eta = 8$ and 4 real roots when $\eta > 8$. This completes the 
proof for the first part of Theorem~\ref{thm:trapping_slowproduct}.

\section{Solutions approach the origin at a slope greater than \texorpdfstring{$\gamma_{SP}'(0)$}{slowp product}. }\label{sec:appendix_crossing_slowproduct}
This section provides the details to prove the inequality:
\begin{equation}
    m - \gamma'_{SP}(0) > 0
\end{equation}
when $k_2 < k_1e_0$, completing the proof for Theorem \ref{thm:trapping_slowproduct}. 

Recall:
\begin{equation}
    m = \frac{1}{2k_{-1}}\left(-k_{-1} - k_2 + k_1e_0 + \sqrt{(k_{-1} + k_2 + k_1e_0)^2 - 4k_1k_2e_0}\right)
\end{equation}
    and 
    \begin{equation}
    \gamma_{SP}'(0) = \frac{k_1e_0}{k_{-1}} - \frac{k_2k_1e_0}{k_{-1}(k_1e_0+k_{-1})}.
\end{equation}
Therefore, $m - \gamma'_{SP}(0)$ is:
\begin{subequations}
    \begin{align}
        m - \gamma_{SP}'(0) &= \frac{-(k_{-1}+k_2+k_1e_0)}{2k_{-1}}  + \frac{k_2k_1e_0}{k_{-1}(k_1e_0+k_{-1})} \\
        & \qquad \qquad + \frac{1}{2k_{-1}}\sqrt{(k_{-1} + k_2 + k_1e_0)^2 - 4k_1k_2e_0}, \\
        &= \frac{-(k_{-1}+k_1e_0)}{2k_{-1}} + \frac{k_2(k_1e_0 - k_{-1})}{2k_{-1}(k_1e_0+k_{-1})} + \frac{1}{2k_{-1}}\sqrt{(k_{-1} + k_2 + k_1e_0)^2 - 4k_1k_2e_0}, \\
        &= \frac{-(k_{-1}+k_1e_0)}{2k_{-1}} + \frac{k_2(k_1e_0 - k_{-1})}{2k_{-1}(k_1e_0+k_{-1})} \\
        & \qquad \qquad  + \frac{1}{2k_{-1}}\sqrt{(k_{-1} + k_1e_0)^2 + k_2^2 + 2k_2k_{-1} - 2k_1k_2e_0}.
    \end{align}
\end{subequations}
Simplifying this expression, we want to show:
\begin{equation}
    \frac{-(k_{-1}+k_1e_0)}{2k_{-1}} + \frac{k_2(k_1e_0 - k_{-1})}{2k_{-1}(k_1e_0+k_{-1})} 
+ \frac{1}{2k_{-1}}\sqrt{(k_{-1} + k_1e_0)^2 + k_2^2 - 2k_2(k_1e_0 - k_{-1})} > 0
\end{equation}
or 
\begin{equation}
\frac{1}{2k_{-1}}\sqrt{(k_{-1} + k_1e_0)^2 + k_2^2 - 2k_2(k_1e_0 - k_{-1})} >  \frac{(k_{-1}+k_1e_0)}{2k_{-1}}  - \frac{k_2(k_1e_0 - k_{-1})}{2k_{-1}(k_1e_0+k_{-1})}  .
\end{equation}
The above expression is equivalent to:
\begin{multline}
    (k_1e_0+k_{-1})\sqrt{(k_{-1} + k_1e_0)^2 + k_2^2 - 2k_2(k_1e_0 - k_{-1})} > \\
    (k_1e_0+k_{-1})^2 - k_2(k_1e_0 - k_{-1}).
\end{multline}
When $k_2 < k_1e_0$, both sides are positive. Thus, we can prove that the above inequality 
is true by showing that the corresponding squared terms satisfy the same inequality. In 
simpler words, if $a, b > 0$ and $a^2 > b^2$, then $a > b$. Squaring both sides and 
simplifying, we can verify that: 
\begin{multline}
    (k_1e_0+k_{-1})^2((k_{-1} + k_1e_0)^2 + k_2^2 - 2k_2(k_1e_0 - k_{-1})) > \\
    (k_1e_0+k_{-1})^4 + k_2^2(k_1e_0 - k_{-1})^2 - 2k_2(k_1e_0 - k_{-1})(k_1e_0+k_{-1})^2.
\end{multline}
Hence, the inequality $m - \gamma_{SP}'(0) > 0$ holds when $k_2 < k_1e_0$.\

\section{Singular perturbation analysis of large \texorpdfstring{$k_{-1}$}{kminus1}}\label{sec:largekminus1}
In perturbation form, the Michaelis-Menten mass action equations with large $k_{-1}$ is
\begin{subequations}
\begin{align}
\dot{s} &= -k_1(e_0-c)s+\varepsilon^{-1}\widehat{k}_{-1}c,\\
\dot{c} &= \;\;\;k_1(e_0-c)s-k_2c-\varepsilon^{-1}\widehat{k}_{-1}c,
\end{align}
\end{subequations}
which admits the corresponding layer problem:
\begin{subequations}
\begin{align}
\dot{s} &= \varepsilon^{-1}\widehat{k}_{-1}c,\\
\dot{c} &= -\varepsilon^{-1}\widehat{k}_{-1}c.
\end{align}
\end{subequations}
The critical manifold is therefore $s$-axis and is normally hyperbolic and attracting. The projection matrix, $\pi^s$, is given by
\begin{equation}
\pi^s = \begin{pmatrix}1 & 1 \\ 0 & 0\end{pmatrix}.
\end{equation}
The Fenichel reduction is therefore
\begin{equation}
\begin{pmatrix}\dot{s}\\\dot{c}\end{pmatrix} = \begin{pmatrix}1 & 1 \\ 0 & 0\end{pmatrix}\cdot\begin{pmatrix}-k_1(e_0-c)s+\varepsilon^{-1}\widehat{k}_{-1}c\\k_1(e_0-c)s-k_2c-\varepsilon^{-1}\widehat{k}_{-1}c\end{pmatrix}\Bigg|_{c=0} = \begin{pmatrix}0\\0\end{pmatrix}
\end{equation}
which is trivial. Thus, one must appeal to higher-order terms in order to recover a non-trivial QSSA. 

\section{QSS approximations and initial conditions}
\label{sec:appendix_initialdata}
This section briefly discusses how to determine appropriate initial conditions for quasi-steady-state approximations. For planar 
systems with a normally hyperbolic and attracting critical manifold, $S_0$, the two-dimensional stable manifold of $S_0$, denoted 
as $W^s(S_0)$, consists of one-dimensional manifolds called {\it fast} fibers. Each point $p_{\mathcal{B}}\in S_0$ is a base 
point,  representing a steady-state (equilibrium) solution to the layer problem. Furthermore, for every base point in $S_0$, 
there exists a corresponding fiber $\mathcal{F}_0(p_{\mathcal{B}})$, which is the stable manifold of that base point:
\begin{equation*}
\mathcal{F}0(p{\mathcal{B}}) =: W^s(p_{\mathcal{B}}), \quad W^s(S_0) = \bigcup_{p_{\mathcal{B}}\in S_0}\mathcal{F}0(p{\mathcal{B}}).
\end{equation*}
These base points serve as approximate initial conditions for quasi-steady-state reductions. Therefore, the task is to identify 
the unique base point that a given trajectory approaches during the transient (fast) phase of the reaction and then use this 
base point as the initial condition for the QSS reduction.

As an illustrative example, consider the Michaelis-Menten mass action system with small $k_2$. The perturbation 
form of the mass action system is:
\begin{subequations}\label{massAC}
\begin{align}
\dot{s} &= -k_1(e_0-c)s+k_{-1}c\
\dot{c} &= ;;;k_1(e_0-c)s-k_{-1}c-\varepsilon \widehat{k}_2c,
\end{align}
\end{subequations}
and the corresponding layer problem is 
\begin{subequations}
\begin{align}
\dot{s} &= -k_1(e_0-c)s+k_{-1}c,\
\dot{c} &= ;;;k_1(e_0-c)s-k_{-1}c.
\end{align}
\end{subequations}
The critical manifold is: 
\begin{equation}
    S_0 := \left\{(s,c)\in \mathbb{R}^2_{\geq 0}: c = \frac{e_0s}{K_S + s}\right\}
\end{equation}
and each base point, $p_{\mathcal{B}}$, can be represented as:
\begin{equation}
p_{\mathcal{B}} =  (s,e_0s/(K_S+s)), \quad \forall s\in[0,s_0].
\end{equation}

The typical initial condition for the Michaelis--Menten reaction mechanism is $(s,c)(0)=(s_0,0)$. As mentioned
previously, the Fenichel reduction for small $k_2$ is:
\begin{equation}\label{slowp2}
\begin{pmatrix}
\dot{s}\\\dot{c}
\end{pmatrix} = -\cfrac{k_2e_0 s(K_S+s)}{K_Se_0 + (K_S+s)^2}\begin{pmatrix} 1\\ \cfrac{K_Se_0 - K(K_S+s)}{(K_S+s)^2}\end{pmatrix}.
\end{equation}
A key question is whether we can equip \eqref{slowp2} with the same initial condition, $(s,c)(0)=(s_0,0)$, and still expect the solution to the mass action system to converge to the solutions of \eqref{slowp2} as $k_2\to 0$.

The answer is no. The reduction~\eqref{slowp2}  approximates the dynamics on the slow manifold; it does not 
describe the dynamics of the approach to the slow manifold. While solutions to the mass action equations~\eqref{massAC} capture 
both the fast and slow stages of the reaction, the QSS reduction~\eqref{slowp2} only describes the slow stage. Consequently,
\eqref{slowp2} requires a ``modified'' initial condition, $\tilde{x}_0=(\tilde{s},\tilde{c})$. This modified initial 
condition is the base point of the fast fiber containing $(s_0,0)$. In this example, the fiber containing $(s,c)(0)=(s_0,0)$ 
is:
\begin{equation}
\mathcal{F}_0(p_{\mathcal{B}}) =\{(s,c)\in \mathbb{R}^2_{\geq 0}: s+c=s_0\}
\end{equation}
and the modified initial condition is the intersection of $\mathcal{F}_0(p_{\mathcal{B}})$ with $S_0$:
\begin{equation}
\tilde{x}_0 = (s,e_0s/(K_S+s)), \quad \text{such that $s$ satisfies:}\;\; s+ \cfrac{e_0s}{K_S+s}=s_0,
\end{equation}
which can be solved algebraically using the quadratic formula:
\begin{equation}
    \tilde{s} = \cfrac{1}{2}\left(s_0-e_0-K_S+\sqrt{(s_0-e_0-K_S)^2+s_0K_S}\right), \quad \tilde{c}= e_0\tilde{s}/(K_S+\tilde{s}).
\end{equation}

In summary, if we equip the reduction~\eqref{slowp2} with the modified initial 
condition~$(s,c)(0)=(\tilde{s},\tilde{c})$, then the solution to this system will converge to the solution of the mass 
action system~\eqref{massAC} (also equipped with the initial condition~$(s,c)(0)=(\tilde{s},\tilde{c})$) as $k_2\to 0$. 

Next, consider the case for small $e_0$, which is more complex than the case for small $k_2$. As previously 
mentioned, the critical manifold is the $s$-axis, but the QSS variety is actually the $c$-nullcline (a distinction  not fully
understood until recently; see \citet{Noethen2011} for details). The base points are $p_{\mathcal{B}}=(s,0)$, and the fibers
are are more intricate (see \citet[Section 3.1]{Noethen2011}, for a discussion of the first integral of the layer problem 
associated with small $e_0$). However, since $(s_0,0)\in S_0$, we can equip the MM rate law,
\begin{equation}
\dot{s} = -\cfrac{k_2e_0s}{K_M+s}
\end{equation}
with the initial condition $s(0)=s_0$.

Initial conditions for the complex, $c$, can also be computed by analyzing the transient dynamics, but we 
will not delve into those details here. Interested readers can consult~\citet{Segel1989} as well as~\citet{Lax2020}.










\end{document}